\documentclass[12pt,english]{paper}
\usepackage[T1]{fontenc}
\usepackage[latin9]{inputenc}
\usepackage{geometry}
\geometry{verbose,tmargin=2.54cm,bmargin=2.54cm,lmargin=2.54cm,rmargin=2.54cm}
\usepackage{float}
\usepackage{amsthm}
\usepackage{amsmath}
\usepackage{amssymb}
\usepackage{graphicx}
\usepackage{esint}

\makeatletter

\providecommand{\tabularnewline}{\\}

\numberwithin{equation}{section}
\numberwithin{figure}{section}
\theoremstyle{plain}
\newtheorem{thm}{\protect\theoremname}

\@ifundefined{showcaptionsetup}{}{%
 \PassOptionsToPackage{caption=false}{subfig}}
\usepackage{subfig}
\makeatother

\usepackage{babel}
\providecommand{\theoremname}{Theorem}

\begin{document}

\title{Quantile Mechanics 3: Series Representations and Approximation of
some Quantile Functions appearing in Finance.}

\author{Asad Munir, William Shaw}

\date{19/03/2011}
\maketitle
\begin{abstract}
It has long been agreed by academics that the inversion method is
the method of choice for generating random variates, given the availability
of the quantile function. However for several probability distributions
arising in practice a satisfactory method of approximating these functions
is not available. The main focus of this paper will be to develop
Taylor and asymptotic series expansions for the quantile functions
belonging to the following probability distributions; Variance Gamma,
Generalized Inverse Gaussian, Hyperbolic and $\alpha$-Stable. As
a secondary matter, based on these analytic expressions we briefly
investigate the problem of approximating the quantile function. \end{abstract}
\begin{keywords}
Quantile Function, Inverse Cumulative Distribution Function, Series
Expansions, Approximations, Generalized Inverse Gaussian, Variance
Gamma, Alpha Stable, Hyperbolic
\end{keywords}

\section{Introduction}

Analytic expressions for quantile functions, have long been sought
after. The importance of these functions comes from their widespread
use in applications of statistics, probability theory, finance and
econometrics. Therefore much effort has been devoted into their study,
in particular since closed form expressions for the quantile function
of most distributions are not known, several approximations appear
in the literature. These approximations generally fall into one of
four categories, series expansions, functional approximations, numerical
algorithms or closed form expressions written in terms of a quantile
function of another distribution. The focus of this report is on the
former two categories.

For a given distribution function $F$ denote by $Q$ the associated
quantile function. We restrict our attention to the class of distributions
for which $F$ is strictly increasing and absolutely continuous. In
this case we have 

\begin{equation}
Q\left(u\right):=F^{-1}\left(u\right)\label{eq:QuantileEqualsInverseCDF}
\end{equation}
 where $F^{-1}$ is the compositional inverse of $F$. Suppose the
corresponding density function $f\left(x\right)$ is known. Differentiating
(\ref{eq:QuantileEqualsInverseCDF}) we obtain the \textit{first order
quantile equation},

\begin{equation}
\frac{dQ\left(u\right)}{du}=\frac{1}{F'\left(F^{-1}\left(u\right)\right)}=\frac{1}{f(Q\left(u\right))}.\label{eq:1stOrderQuantileODE}
\end{equation}
This is an autonomous equation in which the nonlinear terms are introduced
through the density function $f$. Many distributions of interest
have complicated densities, for example the densities of the generalized
hyperbolic distributions are written in terms of higher transcendental
functions. Thus the solutions to (\ref{eq:1stOrderQuantileODE}) are
often difficult to find, and hence this route has been relatively
unexplored in the literature. However provided the reciprocal of the
density $1/f$ is an analytic function at the initial condition $x_{0}=Q\left(u_{0}\right)$
we may employ some of the oldest methods of numerical integration
of initial value problems, namely the method of undetermined coefficients
and the method of successive differentiation, used to determine the
Taylor series expansions of $Q$. Since the equations in question
are non-linear, finding their solutions requires some special series
manipulation techniques found in for example \cite{hairer_solving_1993}.
Shaw and Steinbrecher \cite{steinbrecher_quantile_2008} considered
power series solutions to a nonlinear second order differential equation
focusing on particular distributions belonging to the Pearson family.
We will extend their work here by examining some non-Pearson distributions,
in particular we will look at the cases where $f$ is the density
function of the hyperbolic, variance gamma, generalized inverse Gaussian
and $\alpha$-stable distributions.

The central issue in developing algorithms to approximate the quantile
function $Q$ is that near the singular points the condition number
$\kappa_{Q}\left(u\right)$ defined by,

\[
\kappa_{Q}\left(u\right):=\left|\frac{uQ'\left(u\right)}{Q\left(u\right)}\right|
\]
is large, resulting in an ill conditioned problem (i.e. near the singular
points a small change in $u$ will result in large deviations in $Q$).
A remedy to this problem is to introduce a new variable $z$ to reduce
$\kappa_{Q}$. As an example consider the quantile function of the
standard normal distribution $\Phi^{-1}(u):=\sqrt{2}\textrm{erf}^{-1}\left(2u-1\right)$.
The associated condition number $\kappa_{\Phi^{-1}}\left(u\right)$
grows without bound as $u\rightarrow1$. Hence our problem is ill
conditioned near the singular point $u=1$, and one should therefore
seek a change of variable to reduce $\kappa_{\Phi^{-1}}\left(u\right)$,
for example by introducing the variable $z\left(u\right):=-\ln\left(1-u\right)$.
One may then consider the equivalent problem of approximating the
transformation $A(z):=\Phi^{-1}(1-e^{-z})$. This approach was first
taken in \cite{shaw_quantile_2010}, albeit from a slightly different
perspective. Of course in the normal case the transformed function
$A$ can be written down explicitly, since the form of $Q$ was know.
For situations in which this is not possible we formulate a first
order differential equation, which we will solve in the subsequent
sections for $A$. 

To motivate the idea suppose the \textit{base} distribution used in
the inversion method to generate random variates is the standard uniform
$U(0,1)$. Once a random variate from this distribution has been generated
we may generate a random variate $X$ from the \textit{target} distribution
with c.d.f. $F_{T}$ by setting $X=Q_{T}\left(U\right)$, where $Q_{T}$
is the associated quantile function of the target distribution and
$U\sim U\left(0,1\right)$. However as noted by Shaw and Brickman
in \cite{shaw_quantile_2010} the base distribution need not be restricted
to the standard uniform. Suppose we wish to generate random variates
from a target distribution with distribution function $F_{T}$ and
associated quantile function $Q_{T}$. Assume further that rather
than generating uniform random variates we can generate variates from
a base distribution with distribution function $F_{B}$ and associated
quantile function $Q_{B}$. Now if we could find a function $A$ which
associates the $u^{th}$ quantile from the base distribution with
the $u^{th}$ quantile from the target distribution, 

\begin{equation}
Q_{T}(u)=A\left(Q_{B}\left(u\right)\right),\label{eq:RecyclingFunctionDef}
\end{equation}
we could generate random variates from the target distribution simply
by setting $X=A\left(Q_{B}\left(U\right)\right).$ Note however the
function argument $Z=Q_{B}\left(U\right)$ is itself a random variate
from the base distribution. Thus we now have a recipe to generate
random variates from a specified target distribution given random
variates from a non-uniform base distribution.

There is no restriction on the choice of base distribution, however
for the purpose of approximating $Q_{T}$ a wise choice of the variable
$z:=Q_{B}\left(u\right)$ would be one that reduces the condition
number, $\kappa_{A}\left(z\left(u\right)\right)<\kappa_{Q}\left(u\right)$.
We will now derive an ordinary differential equation describing the
function $A$ arising from the change of variable. Starting with the
first order quantile equation for the target quantile function $Q_{T}$,
\begin{equation}
\frac{dQ_{T}\left(u\right)}{du}=\frac{1}{f_{T}(Q_{T}\left(u\right))},\label{eq:FirstOrderTargetQuantileEquation}
\end{equation}
we make a change of variable $z=Q_{B}\left(u\right)$. A simple application
of the chain rule for differentiation to (\ref{eq:RecyclingFunctionDef})
and using the fact that $Q_{B}$ also satisfies the first order quantile
equation gives,

\[
\frac{dQ_{T}\left(u\right)}{du}=\frac{dA}{dz}\frac{dz}{du}=\frac{dA}{dz}\frac{dQ_{B}}{du}=\frac{dA}{dz}\frac{1}{f_{B}(Q_{B}\left(u\right))}=\frac{dA}{dz}\frac{1}{f_{B}(z)}.
\]
Substituting this identity into (\ref{eq:FirstOrderTargetQuantileEquation})
we obtain a first order nonlinear differential equation, 

\begin{equation}
\frac{dA}{dz}=\frac{f_{B}\left(z\right)}{f_{T}(A\left(z\right))},\label{eq:RecyclingEquation}
\end{equation}
which we call the \textit{first order recycling equation.} A second
order version of this equation was treated in \cite{shaw_quantile_2010}. 

Note that the idea of expressing target quantiles in terms of base
quantiles is not a new one, indeed this is the idea behind the generalized
Cornish Fisher expansion, which originally introduced in \cite{cornish_moments_1938}
and \cite{fisher_percentile_1960} was generalized by Hill and Davis
in \cite{hill_generalized_1968}. The Cornish Fisher expansion has
many drawbacks however, we refer the interested reader to \cite{jaschke_cornish-fisher_2002}.
Another interesting idea along these lines was introduces by Takemura
\cite{takemura_orthogonal_1983}. Here the the Fourier series expansion
of the target quantile $Q_{T}$ is developed with respect to an orthonormal
basis of the form $\left\{ \psi_{i}\circ Q_{B}\right\} _{i=0}^{\infty}$,
where $\left\{ \psi_{i}\right\} _{i=0}^{\infty}$ is itself an orthonormal
basis for a set of square integrable functions, we refer the reader
to the original paper for details. Unlike the Cornish Fisher expansion
which is asymptotic in nature Takemura's approach yields a convergent
series in the $L^{2}$ norm. Note however the computation of the Fourier
coefficients usually requires numerical quadrature and that for approximation
purposes the $L^{\infty}$ norm is preferred. 

Standard numerical techniques such as root finding and interpolation
for approximating the associated quantile functions often fail, particularly
in the tail regions of the distribution, see for example \cite{derflinger_random_2010}.
Thus in addition to developing convergent power series solutions to
(\ref{eq:1stOrderQuantileODE}) and (\ref{eq:RecyclingFunctionDef})
we also develop asymptotic expansions of $Q$ at the singular points
$u=0$ and $u=1$. In general it is not easy to discover these asymptotic
behaviors, some trial and error and intelligent fiddling is required,
\cite[Ch. 4]{bender_advanced_1978}. Our approach is to consider $Q$
as being implicitly defined by the equation, 

\[
F(Q)=u.
\]
Asymptotic iteration methods, see for example \cite[Ch. 2]{bruijn_asymptotic_1981},
may then be used to obtain the leading terms in the expansion. For
numerical purposes the resulting expansions are divergent but may
be summed by employing an appropriate summation technique such as
Shank's $e_{k}$ or Levin's $u$ transforms. The reader is warned
that the approach taken throughout the paper is an applied one, the
reader will certainly notice a lack of rigor at certain points, in
particular in deriving these asymptotic expansions we will execute
some ``sleazy'' maneuvers. 

In this report we will focus on the variance gamma, generalized inverse
Gaussian, hyperbolic, normal inverse Gaussian and $\alpha$-stable
distributions. For each of these distributions we will find power
series expansions of $Q$ and $A$ as well as the asymptotic behavior
of $Q$ near its singularities. Despite the unsightly appearance of
the formulae given in the present paper for the coefficients appearing
in these expansions, they are simple to program if not tedious. It
is difficult to study analytically the convergence properties of these
series, hence we will proceed in an empirical manner. In the last
section of this report we experiment with various numerical algorithms
to approximate $Q$, observing that a scheme based on Chebyshev-Padé
approximants and optimally truncated asymptotic expansions seems most
useful.

\section{Hyperbolic Distribution}

Amongst others it was observed by \cite{eberlein_hyperbolic_1995}
and \cite{bingham_modelling_2001} the hyperbolic distribution is
superior to the normal distribution in modeling log returns of share
prices. Numerical inversion of the hyperbolic distribution function
was considered in \cite{leobacher_method_2002}. There numerical methods
were considered to solve (\ref{eq:1stOrderQuantileODE}) and only
the leading order behavior of the left and right tails was given.
Here we will provide an analytic solution to (\ref{eq:1stOrderQuantileODE})
and the full asymptotic behavior of the tails. The density of the
of the hyperbolic distribution $\textrm{Hyp}\left(\alpha,\beta,\delta,\mu\right)$
is given by, 

\begin{equation}
f\left(x;\alpha,\beta,\delta,\mu\right)=\frac{\gamma}{2\alpha\delta K_{1}\left(\delta\gamma\right)}e^{-\alpha\sqrt{\delta^{2}+(x-\mu){}^{2}}+\beta(x-\mu)},\label{eq:HyperbolicDensityZeroMean}
\end{equation}
where $\alpha>0$, $\left|\beta\right|<\alpha$, $\delta>0$ and $\mu\in\mathbb{R}$
are shape, skewness, scale and location parameters respectively and
for notational convenience we have set $\gamma=\sqrt{\alpha^{2}-\beta^{2}}$.
By defining $\alpha_{1}:=\delta\alpha$ and $\beta_{1}:=\delta\beta$,
we obtain an alternative parametrization in which $\alpha_{1}$ and
$\beta_{1}$ are now location and scale invariant parameters. Hence
without loss of generality we may set $\mu=0$ and $\delta=1$, since
$Q\left(u;\alpha_{1},\beta_{1},\mu,\delta\right)=\mu+\delta Q\left(u;\alpha_{1},\beta_{1},0,1\right)$.
The first order quantile equation (\ref{eq:1stOrderQuantileODE})
then reads, 

\begin{equation}
\frac{dQ}{du}=N_{0}e^{\alpha_{1}\sqrt{1+Q^{2}}-\beta_{1}Q}\label{eq:HyperbolicQE}
\end{equation}
where $N_{0}=2\alpha_{1}\text{K}_{1}\left(\gamma_{1}\right)/\gamma_{1}$
and $\gamma_{1}=\sqrt{\alpha_{1}^{2}-\beta_{1}^{2}}$. To form an
initial value problem (IVP) let $u_{0}\in\left(0,1\right)$ and impose
the initial condition $Q\left(u_{0}\right)=x_{0}$. For all practical
purposes $x_{0}$ is usually determined by solving the equation $F\left(x_{0}\right)-u_{0}=0$
using a root finding procedure. By applying the method of undetermined
coefficients we find $Q$ admits the Taylor series expansion, 

\[
Q\left(u\right)=\sum_{n=0}^{\infty}q_{n}(u-u_{0})^{n},
\]
where the coefficients $q_{n}$ are defined recursively as follows, 

\[
q_{n}\text{=}\begin{cases}
x_{\text{0}}, & n=0\\
\frac{N_{0}}{n}b_{n-1}, & n\geq1
\end{cases},
\]
\smallskip{}

\[
b_{n}=\begin{cases}
e^{\alpha_{1}\sqrt{1+x_{0}^{2}}-\beta_{1}x_{0}}, & n=0\\
\frac{1}{n}\sum_{k=1}^{n}k(\alpha_{1}a_{k}-\beta_{1}q_{k})b_{n-k}, & n\geq1
\end{cases},
\]
and,

\[
a_{n}=\begin{cases}
\sqrt{1+x_{0}^{2}} & n=0\\
\frac{1}{na_{0}}\left(nq_{n}q_{0}+\sum_{k=0}^{n-2}(k+1)(q_{k+1}q_{n-k-1}-a_{k+1}a_{n-k-1})\right) & n\geq1
\end{cases}.
\]

\medskip{}

To develop the asymptotic behavior at the singular points $u=0$ and
$u=1$ note that the hyperbolic distribution function $F$ satisfies
the relationship, 
\[
F\left(x;\alpha_{1},\beta_{1},1,0\right)=1-F\left(-x;\alpha_{1},-\beta_{1},1,0\right),
\]
which implies,

\[
Q\left(u;\alpha_{1},\beta_{1},1,0\right)=-Q\left(1-u;\alpha_{1},-\beta_{1},1,0\right).
\]
Hence without loss of generality we need only look for an asymptotic
expansion of $Q$ as $u\rightarrow1$. Now under the assumption $1\ll Q$
as $u\rightarrow1$, we obtain from (\ref{eq:HyperbolicQE}),

\[
\frac{dQ}{du}\sim N_{0}e^{\left(\alpha_{1}-\beta_{1}\right)Q},\quad\textrm{as}\; u\rightarrow1.
\]
Solving this asymptotic relationship along with the condition $Q\left(1\right)=\infty$
yields the leading order behavior of $Q$,

\[
Q\left(u\right)\sim-\frac{\ln\left(N_{0}\left(\alpha_{1}-\beta_{1}\right)\left(1-u\right)\right)}{\left(\alpha_{1}-\beta_{1}\right)},\quad\textrm{as}\; u\rightarrow1.
\]
From which we note that $Q$ has a logarithmic singularity at $u=1$.
To develop further terms in the expansion, write $x:=Q\left(u\right)$
from which it follows $u=F(x)$ and note that $Q$ is implicitly defined
by the equation, 

\[
1-F(x)=\int_{x}^{\infty}f(t)\, dt
\]
Rearranging and introducing the variable $v$ we obtain, 

\[
v\text{:=}(1-u)N_{0}=\int_{x}^{\infty}e^{-\alpha_{1}\sqrt{1+t^{2}}+\beta_{1}t}\, dt.
\]
The idea is to expand the integrand appearing in the right hand side
and integrate term-wise, this process can be carried out symbolically,
giving the first few terms in the expansion, 

\[
v\sim e^{-x(\alpha_{1}-\beta_{1})}\left(\frac{1}{\alpha_{1}-\beta_{1}}-\frac{\alpha_{1}}{2(\alpha_{1}-\beta_{1})x}+\frac{\alpha_{1}\left(4+\alpha_{1}(\alpha_{1}-\beta_{1})\delta^{2}\right)}{8(\alpha_{1}-\beta_{1})^{2}x^{2}}+O\left(\frac{1}{x^{3}}\right)\right).
\]
Taking logs and inverting the resulting series allows us to write
down the first few terms in the asymptotic expansion of $x$, 

\[
x\sim y+\frac{\alpha_{1}}{2\left(\alpha_{1}-\beta_{1}\right)y}+\frac{\alpha_{1}}{2\left(\alpha_{1}-\beta_{1}\right)^{2}y^{2}}+O\left(\frac{1}{y^{3}}\right)
\]
from which we can conjecture the form the asymptotic expansion of
$Q$ as, 

\begin{equation}
Q(u)\sim y+\sum_{n=1}^{\infty}\frac{q_{n}}{y^{n}},\quad\textrm{as}\; u\rightarrow1,\label{eq:AsympQHyperbolic}
\end{equation}
where,

\[
y=-\frac{\ln\left(N_{0}\left(\alpha_{1}-\beta_{1}\right)\left(1-u\right)\right)}{\left(\alpha_{1}-\beta_{1}\right)}.
\]
Substituting (\ref{eq:AsympQHyperbolic}) into the first order quantile
equation (\ref{eq:HyperbolicQE}) allows us to derive a recurrence
relationship for the coefficients $q_{n}$, 

\[
q_{n}\text{=}\begin{cases}
0 & n=0\\
-\frac{\alpha_{1}}{2(\alpha_{1}-\beta_{1})} & n=1\\
-\frac{1}{(\alpha_{1}-\beta_{1})}\left((n-1)q_{n-1}+\frac{1}{n}\sum_{k=1}^{n-1}kb_{k}c_{n-k}+\alpha_{1}d_{n}\right) & n\geq2
\end{cases},
\]
where,

\[
a_{n}\text{=}\begin{cases}
0 & n=0\\
\frac{1}{2}+q_{1} & n=1\\
q_{n}+d_{n} & n\geq2
\end{cases},
\]

\[
b_{n}=\alpha_{1}a_{n}-\beta_{1}q_{n},
\]

\[
c_{n}\text{=}\begin{cases}
1 & n=0\\
\frac{1}{n}\sum_{k=1}^{n}kb_{k}c_{n-k} & n\geq1
\end{cases},
\]
and,

\[
d_{n}=\frac{1}{n-1}\sum_{k=0}^{n-2}(k+1)(q_{k+1}q_{n-k-2}-a_{k+1}a_{n-k-2}).
\]

Note that (\ref{eq:AsympQHyperbolic}) is a divergent series, however
of the summation methods we tested we found Levin's $u$ transform
and Padé approximants particular useful for summing (\ref{eq:AsympQHyperbolic}).
In both cases analytic continuation was observed. Later we will briefly
look at algorithms for constructing rational approximants of $Q$
valid on the domain $\left[10^{-10},1-10^{-10}\right]$, but if necessary
one may utilize (\ref{eq:AsympQHyperbolic}) to obtain approximations
on a wider region. 

As mentioned in the introduction the problem of approximating the
quantile function near its singular points is an ill conditioned one.
To this end it will be useful to introduce a change of variable which
reduces the condition number $\kappa_{Q}$. Motivated by the asymptotic
behavior of $Q$ near its singular points, in particular its leading
order behavior we introduce the base distribution defined by the density,

\[
f_{B}\left(x\right):=\begin{cases}
p_{-}\left(\alpha_{1}+\beta_{1}\right)e^{\left(\alpha_{1}+\beta_{1}\right)x}, & x\leq x_{m}\\
p_{+}\left(\alpha_{1}-\beta_{1}\right)e^{-\left(\alpha_{1}-\beta_{1}\right)x}, & x>x_{m}
\end{cases}.
\]
Here $x_{m}:=\beta_{1}/\gamma_{1}$ is the mode of the hyperbolic
distribution, $p_{-}:=e^{-(\text{\ensuremath{\alpha}1}+\text{\ensuremath{\beta}1})x_{m}}p_{m}$,
$p_{+}:=e^{(\text{\ensuremath{\alpha}1}-\text{\ensuremath{\beta}1})x_{m}}\left(1-p_{m}\right)$
and $p_{m}=F_{T}\left(x_{m}\right)$. The associated distribution
and quantile functions can be written down as,

\[
F_{B}\left(x\right):=\begin{cases}
p_{-}e^{\left(\alpha_{1}+\beta_{1}\right)x}, & x\leq x_{m}\\
1-p_{+}e^{-\left(\alpha_{1}-\beta_{1}\right)x}, & x>x_{m}
\end{cases},
\]
and,

\[
Q_{B}\left(u\right)=\begin{cases}
\frac{1}{\alpha_{1}+\beta_{1}}\ln\left(\frac{u}{p_{-}}\right), & u\leq p_{m}\\
-\frac{1}{\alpha_{1}-\beta_{1}}\ln\left(\frac{1-u}{p_{+}}\right), & u>p_{m}
\end{cases},
\]
respectively. Substituting this choice of $f_{B}$ into the recycling
equation (\ref{eq:RecyclingEquation}) results in a left and right
problem, 

\[
\frac{dA}{dz}=p_{-}(\alpha_{1}+\beta_{1})Ce^{\alpha_{1}\sqrt{1+A^{2}}-\beta_{1}A+\left(\alpha_{1}+\beta_{1}\right)z},\quad z\leq x_{m},
\]
and

\[
\frac{dA}{dz}=p_{+}(\alpha_{1}-\beta_{1})Ce^{\alpha_{1}\sqrt{1+A^{2}}-\beta_{1}A-\left(\alpha_{1}-\beta_{1}\right)z},\quad z>x_{m},
\]
respectively, along with the suitably imposed initial conditions.
For the left problem, we choose $u_{0}\in\left(0,p_{m}\right]$ and
impose the initial condition $x_{0}=A\left(z_{0}\right)=Q_{T}\left(u_{0}\right)$,
where $z_{0}:=Q_{B}\left(u_{0}\right)$. Similarly for the right problem
choose $u_{0}\in\left[p_{-},1\right)$. Again through the method of
undetermined coefficients we obtain the Taylor series expansion of
$A$, 

\[
A\left(z\right)=\sum_{n=0}^{\infty}a_{n}(z-z_{0})^{n},
\]
where the coefficients are defined recursively by, 

\[
a_{n}=\begin{cases}
x_{0} & n=0\\
\frac{C}{n}\theta\left(\alpha_{1}+\beta_{1}\right)d_{n-1} & n\geq1
\end{cases},
\]

\[
b_{n}=\begin{cases}
\sqrt{1+x_{0}^{2}} & n=0\\
\frac{1}{nb_{0}}\left(na_{n}a_{0}+\sum_{k=0}^{n-2}(k+1)\left(a_{k+1}a_{n-k-1}-b_{k+1}b_{n-k-1}\right)\right) & n\geq1
\end{cases},
\]

\[
c_{n}=\begin{cases}
\alpha_{1}\left(b_{1}+\phi\right)+\beta_{1}\left(1-a_{1}\right) & n=1\\
\alpha_{1}b_{n}-\beta_{1}a_{n} & n\neq1
\end{cases},
\]
and,

\[
d_{n}=\begin{cases}
e^{\alpha_{1}\sqrt{1+x_{0}^{2}}-\beta_{1}x_{0}+\rho z_{0}} & n=0\\
\frac{1}{n}\sum_{k=1}^{n}kc_{k}d_{n-k} & n\geq1
\end{cases}.
\]
\smallskip{}
For the solution to the left and right problem make the following
replacements,

\begin{table}[H]

\caption{Coefficient Parameters}
\begin{center}

\begin{tabular}{|c|c|c|}
\hline 
 & Left & Right\tabularnewline
\hline 
\hline 
$\theta$ & $p_{-}$ & $p_{+}$\tabularnewline
\hline 
$\phi$ & $+1$ & $-1$\tabularnewline
\hline 
$\rho$ & $\left(\alpha_{1}+\beta_{1}\right)$ & $-\left(\alpha_{1}-\beta_{1}\right)$\tabularnewline
\hline 
\end{tabular}

\end{center}
\end{table}

\section{Variance Gamma}

The variance gamma distribution was introduced in the finance literature
by Madan and Seneta in \cite{madan_variance_1990}. To our knowledge
there has been very little written on the approximation of the variance
gamma quantile. The density is given by, 

\begin{equation}
f_{\textrm{VG}}\left(x;\lambda,\alpha,\beta,\mu\right)=\frac{\gamma^{2\lambda}}{\left(2\alpha\right)^{\lambda-1/2}\sqrt{\pi}\Gamma\left(\lambda\right)}\left|x-\mu\right|^{\lambda-\frac{1}{2}}\textrm{K}_{\lambda-\frac{1}{2}}\left(\alpha\left|x-\mu\right|\right)e^{\beta\left(x-\mu\right)},\label{eq:VarianceGammaDensity}
\end{equation}
where $\lambda>0$, $\alpha>0$, $\left|\beta\right|<\alpha$, $\gamma=\sqrt{\alpha-\beta}$
and $\mu\in\mathbb{R}$. Setting the location parameter $\mu$ to
zero and substituting the density (\ref{eq:VarianceGammaDensity})
into the quantile equation (\ref{eq:1stOrderQuantileODE}) gives, 

\begin{equation}
\frac{dQ}{du}=g(Q(u))\label{eq:VGQuantileEquation}
\end{equation}
where the function $g$ is defined as,

\[
g\left(y\right):=N_{0}\frac{e^{-\beta y}|y|^{\frac{1}{2}-\lambda}}{\textrm{K}_{\lambda-\frac{1}{2}}(\alpha|y|)}.
\]
and $N_{0}=\left(2\alpha\right)^{\lambda-1/2}\sqrt{\pi}\Gamma\left(\lambda\right)/\gamma^{2\lambda}$.
Our strategy to solve (\ref{eq:VGQuantileEquation}) will be to apply
the method of successive differentiation to obtain the Taylor series
representation of $Q$,

\begin{equation}
Q_{\textrm{VG}}\left(u\right)=\sum_{n=0}^{\infty}q_{n}(u-u_{0})^{n}\label{eq:VGTaylorQ}
\end{equation}
 where $q_{0}$ is determined by the imposed initial condition, and
the remaining coefficients are given by, 

\[
q_{n}=\left.\frac{d^{n-1}}{du^{n-1}}\right|_{u=u_{0}}g\left(Q\left(u\right)\right),\quad n\geq1
\]
Thus the problem reduces to finding the higher order derivatives of
the composition $g\circ Q$, which can be obtained recursively through
by applying Faà di Bruno's formula as follows. Note first that $g$
can be written as the product of three functions $a\left(y\right):=e^{-\beta y}$,
$b\left(y\right):=|y|^{\frac{1}{2}-\lambda}$ and $c\left(y\right):=\left[K_{\lambda-\frac{1}{2}}(\alpha|y|)\right]^{-1}$.
Thus an application of the general Leibniz rule yields, 

\begin{equation}
g^{\left(n\right)}\left(y\right)=N_{0}\sum_{k=0}^{n}\sum_{j=0}^{k}\binom{n}{k}\binom{k}{j}a^{\left(n-k\right)}\left(y\right)b^{\left(j\right)}\left(y\right)c^{\left(k-j\right)}\left(y\right).\label{eq:gVGDerivatives}
\end{equation}
The higher order derivatives of the functions $a$ and $b$ appearing
in (\ref{eq:gVGDerivatives}) are given by, 

\[
a^{(n)}\left(y\right)=\left(-\beta\right)^{n}e^{-\beta y},
\]

\[
b^{(n)}\left(y\right)=y^{\frac{1}{2}-n-\lambda}\prod_{i=0}^{n-1}\left(\frac{1}{2}-\lambda-i\right)\times\begin{cases}
+1 & y>0\\
-1 & y<0
\end{cases}.
\]
To find the $n^{th}$ derivative of $c$ note that $c$ can be written
as the composition $c\left(y\right)=c_{1}\circ c_{2}\left(y\right)$,
where $c_{1}\left(y\right):=1/y$ and $c_{2}\left(y\right):=K_{\lambda-\frac{1}{2}}(\alpha|y|)$.
Hence we may obtain $c^{\left(n\right)}\left(y\right)$ through an
application of Faà di Bruno's formula, 

\[
c^{\left(n\right)}\left(y\right)=\sum\frac{n!}{m_{1}!m_{2}!\cdots m_{n}!}c_{1}^{\left(m_{1}+\cdots+m_{n}\right)}\left(c_{2}\left(y\right)\right)\prod_{j=0}^{n}\left(\frac{c_{2}^{\left(j\right)}\left(y\right)}{j!}\right)^{m_{j}},
\]
where the summation is taken over all solutions $\left(m_{1},m_{2},\ldots,m_{n}\right)\in\mathbb{Z}_{\geq0}^{n}$
to the Diophantine equation,

\begin{equation}
m_{1}+2m_{2}+3m_{3}\cdots+nm_{n}=n.\label{eq:DiophantineEquation}
\end{equation}
The formulae for the higher order derivatives of $c_{1}$ and $c_{2}$
are given by, 

\[
c_{1}^{(n)}\left(y\right)=\left(-1\right)^{n}n!y^{-(n+1)},
\]
and

\[
c_{2}^{\left(n\right)}\left(y\right)=\left(-\frac{\alpha}{2}\right)^{n}\sum_{k=0}^{n}\binom{n}{k}\times\begin{cases}
\textrm{K}_{\lambda-\frac{1}{2}-(2k-n)}(\alpha y), & y>0\\
\left(-1\right)^{n}\textrm{K}_{\lambda-\frac{1}{2}-(2k-n)}(-\alpha y) & y<0
\end{cases}.
\]
Where we have used the identity \cite[eq. 10.29.5]{olver_nist_2010}, 

\begin{equation}
\textrm{K}_{v}^{\left(n\right)}\left(z\right)=\left(-\frac{1}{2}\right)^{n}\sum_{k=0}^{n}\binom{n}{k}\textrm{K}_{v-\left(2k-n\right)}\left(z\right),\label{eq:BesselKDerivatives}
\end{equation}
which can easily be proved by induction. Formula (\ref{eq:BesselKDerivatives})
shows that, to find the higher order derivatives with respect to $z$
of the modified Bessel function of the second kind $\textrm{K}_{v}^{\left(n\right)}\left(z\right)$
we need only a routine to compute $\textrm{K}_{v}\left(z\right)$.
Now given the scheme (\ref{eq:gVGDerivatives}) to compute $g^{\left(n\right)}\left(x\right)$
and the coefficient $q_{0}$ defined by the initial condition we may
then compute $q_{n}$ for $n\geq1$ recursively, by another application
of Faà di Bruno's formula. It is important to note that to generate
the coefficients $q_{n}$ appearing in (\ref{eq:VGTaylorQ}) does
not require the use of any symbolic computation. 

Next we will focus on deriving an asymptotic expansion for $Q\left(u\right)$.
Similar to the hyperbolic quantile the following equality holds, 

\[
Q\left(u;\lambda,\alpha,\beta,0\right)=-Q\left(1-u;\lambda,\alpha,-\beta,0\right),
\]
so again without loss of generality we need only seek an asymptotic
expansion of $Q$ as $u\rightarrow1$. Our strategy here will be to,
\begin{enumerate}
\item derive an asymptotic expansion for the density $f$ as $x\rightarrow\infty$,
\item integrate term-wise to obtain an asymptotic expansion for the distribution
function $F$ as $x\rightarrow\infty$,
\item and finally invert this expansion to obtain an asymptotic expansion
for the quantile function $Q$ as $u\rightarrow1$.
\end{enumerate}
For the first step we will make use of the asymptotic relationship
\cite[§10.40]{olver_nist_2010},

\[
\textrm{K}_{v}\left(z\right)\sim\left(\frac{\pi}{2z}\right)^{\frac{1}{2}}e^{-z}\sum_{k=0}^{\infty}\frac{a_{k}\left(v\right)}{z^{k}}\quad\textrm{as}\; z\rightarrow\infty,
\]
where,

\[
a_{k}\left(v\right):=\frac{1}{k!8^{k}}\prod_{j=1}^{k}\left(4v^{2}-(2j-1)^{2}\right).
\]
From which it follows

\[
f\left(x\right)\sim\frac{\gamma^{2\lambda}}{\left(2\alpha\right)^{\lambda}\Gamma\left(\lambda\right)}e^{-\left(\alpha-\beta\right)x}x^{\lambda-1}\sum_{k=0}^{\infty}\frac{a_{k}\left(\lambda-\frac{1}{2}\right)}{\alpha^{k}}x^{-k}\quad\textrm{as}\; x\rightarrow\infty.
\]
Working under the assumption that term-wise integration is a legal
operation we obtain an asymptotic expansion for the distribution function, 

\begin{eqnarray}
1-F\left(x\right) & \sim & \frac{\gamma^{2\lambda}}{2^{\lambda}\Gamma\left(\lambda\right)}\sum_{k=0}^{\infty}\alpha^{-\lambda-k}a_{k}\left(\lambda-2^{-1}\right)\int_{x}^{\infty}e^{-\left(\alpha-\beta\right)x}t^{\lambda-k-1}dt\nonumber \\
 & = & \frac{\gamma^{2\lambda}}{2^{\lambda}\Gamma\left(\lambda\right)}\sum_{k=0}^{\infty}\alpha^{-\lambda-k}a_{k}\left(\lambda-2^{-1}\right)\Gamma\left(\lambda-k,x\left(\alpha-\beta\right)\right),\label{eq:VGAsymptDeriv1}
\end{eqnarray}
where $\Gamma\left(a,z\right)$ is the upper incomplete gamma function,
which for large $z$ satisfies the asymptotic relationship \cite[§8.11]{olver_nist_2010},

\begin{equation}
\Gamma\left(a,z\right)\sim z^{a-1}e^{-z}\sum_{j=0}^{\infty}\frac{\Gamma\left(a\right)}{\Gamma\left(a-j\right)}z^{-j}.\label{eq:IncompleteGammaAsympExpansion}
\end{equation}
Substituting into (\ref{eq:VGAsymptDeriv1}) and assuming the terms
of the series may be rearranged we obtain,

\begin{equation}
1-F\left(x\right)\sim\frac{\left(2\alpha\right)^{-\lambda}\gamma^{2\lambda}}{\Gamma(\lambda)}x^{\lambda-1}e^{-x(\alpha-\beta)}\sum_{k=0}^{\infty}b_{k}x^{-k}\quad\textrm{as}\; x\rightarrow\infty,\label{eq:VGCDFAsympExpansion}
\end{equation}
where, 

\[
b_{k}=\sum_{j=0}^{k}(\alpha-\beta)^{-(j+1)}\alpha^{-(k-j)}\left(\prod_{i=0}^{k-1}(\lambda-k+i)\right)a_{k-j}\left(\lambda-\frac{1}{2}\right).
\]
The expression (\ref{eq:VGCDFAsympExpansion}) describes the asymptotic
behavior of the variance gamma distribution function as $x\rightarrow\infty$.
Let $u=F\left(x\right)$ in (\ref{eq:VGCDFAsympExpansion}), our goal
then is to invert this relationship to obtain an asymptotic expansion
of the quantile function $Q\left(u\right)$ as $u\rightarrow1$. Introducing
the variable, 

\[
v\text{:=}\frac{(2\alpha)^{\lambda}\sqrt{\pi}\Gamma(\lambda)}{\gamma^{2\lambda}}(1-u),
\]
and rearranging (\ref{eq:VGCDFAsympExpansion}) we obtain, 

\begin{equation}
v\sim x^{\lambda-1}e^{-x(\alpha-\beta)}D\left(\frac{1}{x}\right)\quad\textrm{as}\; x\rightarrow\infty.\label{eq:VGAsymptDeriv2}
\end{equation}
where $D$ is the formal power series defined by $D\left(z\right)=\sum_{k=0}^{\infty}b_{k}z^{k}$.
Taking logs and introducing the variable $y:=-\ln v/\left(\alpha-\beta\right)$,
we may write (\ref{eq:VGAsymptDeriv2}) as, 

\begin{equation}
x\sim y+\frac{\lambda-1}{\alpha-\beta}\log x+\frac{1}{\alpha-\beta}\log D\left(\frac{1}{x}\right)\label{eq:VGAsymptDeriv3}
\end{equation}
We wish to write $x$ in terms $y$; this task might at first may
appear difficult to achieve, but as it happens we are in luck, similar
expressions occur frequently in analytic number theory, and some useful
methods have been developed to invert these kinds of relationships.
A drawback of these methods is that they require symbolic computation.
The most basic method known to us, one can apply to invert (\ref{eq:VGAsymptDeriv3})
is the method of asymptotic iteration \cite{bruijn_asymptotic_1981}.
This method however is extremely slow. A much more efficient approach
is to use the method of Salvy\cite{de_recherche_asymptotic_1992}.
There it was noted that the form of the asymptotic inverse is given
by, 

\[
x=Q\left(u\right)\sim y+\sum_{n=0}^{\infty}\frac{P_{n}\left(\textrm{ln}y\right)}{y^{n}},
\]
where $P_{0}\left(\xi\right)$ is a polynomial of degree 1 and $P_{n}\left(\xi\right)$
are polynomials of degree $n$ for $n\geq1$. Following a similar
analysis of that in \cite{de_recherche_asymptotic_1992} it can by
shown that $P_{n}$ may be determined up to some unknown constant
terms $c_{0},\ldots,c_{n}$ by the recurrence relationship, 

\begin{equation}
P_{n}\left(\xi\right)=\begin{cases}
\frac{(\lambda-1)}{(\alpha-\beta)}\xi+c_{0} & n=0\\
\frac{(\lambda-1)}{(\alpha-\beta)}(P_{n-1}\left(\xi\right)-P_{n-1}\left(0\right))-\frac{(\lambda-1)}{(\alpha-\beta)}(n-1)\int_{0}^{\xi}P_{n-1}\left(t\right)\, dt+c_{n} & n\geq0
\end{cases}.\label{eq:VGAsymptDeriv4}
\end{equation}
The unknown terms $c_{0},\ldots,c_{n}$ may be computed through the
following iteration scheme,
\begin{itemize}
\item starting with $u_{0}\left(t\right)=\log\left(b_{0}\right)$, compute,
\[
u_{k}\left(t\right)=\frac{(\lambda-1)}{(\alpha-\beta)}\log\left(1+tu_{k-1}\left(t\right)\right)+\frac{1}{\alpha-\beta}\log D\left(\frac{t}{1+tu_{k-1}\left(t\right)}\right),\quad k=1,\ldots,n+1,
\]

\item extract the constants, 
\[
c_{k}=\left[t^{k}\right]u_{n+1}\left(t\right),\quad k=0,\ldots,n
\]
where $\left[t^{k}\right]u_{n+1}\left(t\right)$ is used to denote
the coefficient of the $t^{k}$ term in $u_{n+1}\left(t\right)$.
\end{itemize}
An implementation of this scheme in Mathematica code has been included
in appendix (TODO: Include code). Through this process the first few
terms of the asymptotic expansion of $Q$ may be generated as follows, 

\begin{eqnarray*}
Q\left(u\right) & \sim & y+\frac{(\lambda-1)\log y-\log\left(\alpha-\beta\right)}{\alpha-\beta}\\
 &  & \quad+\frac{(\lambda-1)\left(2\alpha+\alpha\lambda-\beta\lambda+2\alpha(\lambda-1)\log y-2\alpha\log\left(\alpha-\beta\right)\right)}{2y\alpha(\alpha-\beta)^{2}}+\cdots,\quad\textrm{as}\; u\rightarrow1.
\end{eqnarray*}

Next we focus on introducing a change of variable to reduce the condition
number $\kappa_{Q}$ near the tails and finding the Taylor series
representation of the corresponding function $A$. As in the hyperbolic
case motivated by the asymptotic behavior of the quantile function
near its singularities, in particular by its leading order behavior
we introduce the base distribution defined by the density,

\[
f_{B}\left(x\right):=\begin{cases}
p_{-}\left(\alpha+\beta\right)e^{\left(\alpha+\beta\right)x} & x\leq0\\
p_{+}\left(\alpha-\beta\right)e^{-\left(\alpha-\beta\right)x} & x>0
\end{cases},
\]
where $p_{-}=F\left(0\right)$ and $p_{+}=1-p_{-}$. The associated
distribution and quantile functions can be written down as,

\[
F_{B}\left(x\right):=\begin{cases}
p_{-}e^{\left(\alpha_{1}+\beta_{1}\right)x}, & x\leq0\\
1-p_{+}e^{-\left(\alpha_{1}-\beta_{1}\right)x}, & x>0
\end{cases},
\]
and,

\[
Q_{B}\left(u\right)=\begin{cases}
\frac{1}{\alpha_{1}+\beta_{1}}\ln\left(\frac{u}{p_{-}}\right), & u\leq p_{-}\\
-\frac{1}{\alpha_{1}-\beta_{1}}\ln\left(\frac{1-u}{p_{+}}\right), & u>p_{-}
\end{cases},
\]
respectively. Substituting this choice of $f_{B}$ into the recycling
equation (\ref{eq:RecyclingEquation}) results in a left and right
problem, 

\[
\frac{dA}{dz}=g_{L}(z),\quad z\leq Q_{B}\left(p_{-}\right),
\]
and

\[
\frac{dA}{dz}=g_{R}(z),\quad z>Q_{B}\left(p_{-}\right),
\]
respectively, along with the suitably imposed initial conditions.
For the left problem, we choose $u_{0}\in\left(0,p_{-}\right]$ and
impose the initial condition $x_{0}=A\left(z_{0}\right)=Q_{T}\left(u_{0}\right)$,
where $z_{0}:=Q_{B}\left(u_{0}\right)$. Similarly for the right problem
choose $u_{0}\in\left[p_{-},1\right)$. The functions $g_{L}$ and
$g_{R}$ appearing on right hand side of these differential equations
are defined as, 

\[
g_{\textrm{L}}\left(z\right):=p_{-}(\alpha+\beta)e^{(\alpha+\beta)z}g\left(A\left(z\right)\right),
\]
and,

\[
g_{\textrm{R}}\left(z\right):=p_{+}(\alpha-\beta)e^{-(\alpha-\beta)z}g\left(A\left(z\right)\right).
\]
Suppose that the series solution of either problem is given by, 

\[
A\left(z\right)=\sum_{n=0}^{\infty}a_{n}(z-z_{0})^{n}.
\]
Here the first coefficient $a_{0}$ is determined by the initial condition
imposed at $z_{0}$ and the remaining coefficients are given by, 

\[
a_{n}=g_{\textrm{L}}^{\left(n-1\right)}\left(z_{0}\right),\quad n\geq1,
\]
for the left problem and, 

\[
a_{n}=g_{\textrm{R}}^{\left(n-1\right)}\left(z_{0}\right),\quad n\geq1,
\]
for the right problem. Both sets of coefficients may easily be computed
from an application of Liebniz's rule,

\[
g_{\textrm{L}}^{\left(n\right)}\left(z\right)=p_{-}e^{(\alpha+\beta)z}\sum_{k=0}^{n}\binom{n}{k}(\alpha+\beta)^{k+1}\left(g\circ A\right)^{\left(n-k\right)}\left(z\right),
\]

\[
g_{\textrm{R}}^{\left(n\right)}\left(z\right)=p_{+}e^{-(\alpha-\beta)z}\sum_{k=0}^{n}\binom{n}{k}\left(-1\right)^{k}(\alpha-\beta)^{k+1}\left(g\circ A\right)^{\left(n-k\right)}\left(z\right).
\]
Starting with $a_{0}$, the higher order derivatives of the composition
$g\circ A$ appearing in these formulae, are computed recursively
in precisely the same way we computed the higher order derivatives
of $g\circ Q$ above.

\section{Generalized Inverse Gaussian}

The generalized inverse Gaussian (GIG) is a three parameter distribution,
special cases of which are the inverse Gaussian, positive hyperbolic
and Lévy distributions to name a few. It arises naturally in the context
of first passage times of a diffusion process. The probability density
function of a GIG random variable is given by, 

\begin{equation}
f_{\textrm{GIG}}\left(x;\lambda,\chi,\psi\right)=\frac{\left(\psi/\chi\right)^{\lambda/2}}{2\textrm{K}_{\lambda}\left(\sqrt{\psi\chi}\right)}x^{\lambda-1}e^{-\frac{1}{2}\left(\chi x^{-1}+\psi x\right)},\quad x>0,\label{eq:GIGDensity}
\end{equation}
where $\lambda\in\mathbb{R}$, $\chi>0$, $\psi>0$ and $\textrm{K}_{v}\left(z\right)$
is the modified Bessel function of the third kind with index $v$.
The GIG distribution is also used in the construction of an important
family of distributions called the generalized hyperbolic distributions;
more specifically a normal mean mixture distribution where the mixing
distribution is the GIG distribution results in a generalized hyperbolic
distribution. Consequently if one can generate GIG random variates
then a simple transformation may be applied to generate variates from
the generalized hyperbolic distribution \cite{weron_computationally_2004}. 

We will use an alternative parametrization to the standard one above,
let $\eta=\sqrt{\chi/\psi}$ and $\omega=\sqrt{\chi\psi}$, the density
then reads, 

\begin{equation}
f_{\textrm{GIG}}\left(x;\lambda,\eta,\omega\right)=\frac{1}{2\eta^{\lambda}\textrm{K}_{\lambda}\left(\omega\right)}x^{\lambda-1}e^{-\frac{\omega}{2}\left(\eta x^{-1}+\frac{1}{\eta}x\right)}.\label{eq:GIGDensityAlternativeParam}
\end{equation}
In this new parametrization $\omega$ and $\lambda$ are scale invariant
and $\eta$ is a scale parameter, so in the following without loss
of generality we may set $\eta=1$. The first order quantile equation
(\ref{eq:1stOrderQuantileODE}) now reads, 

\[
\frac{dQ}{du}=2\text{K}_{\lambda}\left(\omega\right)e^{\frac{1}{2}\omega\left(\frac{1}{Q}+Q\right)}Q{}^{1-\lambda}.
\]
Let $u_{0}\in\left(0,1\right)$ and impose the initial condition $Q\left(u_{0}\right)=x_{0}$.
For the case $\lambda\neq1$, the GIG quantile $Q$ admits the Taylor
series expansion, 

\begin{equation}
Q\left(u\right)=\sum_{n=0}^{\infty}q_{n}(u-u_{0})^{n},\label{eq:GIGTaylorSeries}
\end{equation}
where the coefficients $q_{n}$ are defined recursively as follows, 

\[
q_{n}=\begin{cases}
x_{0} & n=0\\
\frac{2}{n}K_{\lambda}(\omega)\sum_{i=0}^{n-1}b_{i}c_{n-i-1} & n\geq1
\end{cases},
\]

\[
a_{n}=\begin{cases}
\frac{1}{q_{0}} & n=0\\
-\frac{1}{q_{0}}\sum_{i=1}^{n}q_{i}a_{n-i} & n\geq1
\end{cases},
\]

\[
b_{n}=\begin{cases}
e^{\frac{\omega}{2}\left(a_{0}+q_{0}\right)} & n=0\\
\frac{\omega}{2n}\sum_{i=1}^{n}i\left(a_{i}+q_{i}\right)b_{n-i} & n\geq1
\end{cases},
\]
and

\[
c_{n}=\begin{cases}
q_{0}^{1-\lambda} & n=0\\
\frac{1}{q_{0}}\sum_{i=1}^{n}\left(\frac{(2-\lambda)i}{n}-1\right)q_{i}c_{n-i} & n\geq1
\end{cases}.
\]
For the special case $\lambda=1$, the coefficients are somewhat simplified,
with $a_{n}$ and $b_{n}$ as above the coefficients appearing in
(\ref{eq:GIGTaylorSeries}) become, 

\[
q_{n}=\begin{cases}
x_{0} & n=0\\
\frac{2}{n}K_{1}(\omega)b_{n-1} & n\geq1
\end{cases}.
\]
Next we will focus on developing the asymptotic behavior of $Q$ as
$u\rightarrow1$. We proceed in an analogous fashion to the variance
gamma case, and find that remarkably the form of asymptotic expansion
of $Q_{\textrm{GIG}}$ is very similar to that of $Q_{\textrm{VG}}$
as $u\rightarrow1$. From the definition of the distribution function
we have,

\[
1-F\left(x\right)=\frac{1}{2K_{\lambda}(\omega)}\int_{x}^{\infty}t^{\lambda-1}e^{-\frac{1}{2}\left(\frac{1}{t}+t\right)\omega}dt.
\]
Expanding the $e^{-\omega/2t}$ term and integrating term-wise we
obtain, 

\[
1-F\left(x\right)\sim\frac{1}{2K_{\lambda}(\omega)}\sum_{k=0}^{\infty}\frac{(-1)^{k}}{k!}\left(\frac{2}{\omega}\right)^{\lambda-2k}\Gamma\left(\lambda-k,\frac{\omega x}{2}\right),\quad\textrm{as}\; x\rightarrow\infty,
\]
where $\Gamma\left(a,z\right)$ is the upper incomplete gamma function.
As in the variance gamma case substituting (\ref{eq:IncompleteGammaAsympExpansion})
and rearranging the terms provides us with a more convenient form
of the asymptotic expansion for the generalized inverse Gaussian distribution
function,

\begin{equation}
1-F\left(x\right)\sim\frac{1}{2K_{\lambda}(\omega)}x^{\lambda-1}e^{-\frac{\omega x}{2}}\sum_{k=0}^{\infty}b_{k}x^{-k},\quad\textrm{as}\; x\rightarrow\infty,\label{eq:GIGCDFAsympExpansion}
\end{equation}
where,

\[
b_{k}=\sum_{j=0}^{k}\frac{(-1)^{k-j}}{(k-j)!}\left(\frac{\omega}{2}\right)^{k-2j-1}\left(\prod_{i=0}^{j-1}(\lambda-k+i)\right).
\]
Now let $u=F\left(x\right)$ and introduce the variable $v=2K_{\lambda}(\omega)(1-u)$,
then we can rewrite (\ref{eq:GIGCDFAsympExpansion}) as,

\begin{equation}
v\sim x^{\lambda-1}e^{-\frac{\omega x}{2}}D(\frac{1}{x}),\label{eq:GIGAsympDeriv1}
\end{equation}
where $D$ is the formal power series defined by $D\left(z\right)=\sum_{k=0}^{\infty}b_{k}z^{k}$.
To invert the asymptotic relationship (\ref{eq:GIGAsympDeriv1}) we
start by taking logs and introducing the variable $y:=-(2/\omega)\ln v$.
We may now write (\ref{eq:GIGAsympDeriv1}) as, 

\begin{equation}
x\sim y+\frac{2(\lambda-1)}{\omega}\log x+\frac{2}{\omega}\log D\left(\frac{1}{x}\right).\label{eq:GIGAsympDeriv2}
\end{equation}
One may now apply the method of asymptotic iteration \cite{bruijn_asymptotic_1981}
to invert (\ref{eq:GIGAsympDeriv2}). As mentioned earlier this method
however is extremely slow and a much more efficient approach is to
use the method of Salvy\cite{de_recherche_asymptotic_1992}. Again
the form of the asymptotic inverse is given by, 

\[
x=Q\left(u\right)\sim y+\sum_{n=0}^{\infty}\frac{P_{n}\left(\textrm{ln}y\right)}{y^{n}},
\]
where $P_{0}\left(\xi\right)$ is a polynomial of degree 1 and $P_{n}\left(\xi\right)$
are polynomials of degree $n$ for $n\geq1$ and following a similar
analysis of that in \cite{de_recherche_asymptotic_1992} it can by
shown that $P_{n}$ may be determined up to some unknown constant
terms $c_{0},\ldots,c_{n}$ by the recurrence relationship, 

\begin{equation}
P_{n}\left(\xi\right)=\begin{cases}
\frac{2(\lambda-1)}{\omega}\xi+c_{0} & n=0\\
\frac{2(\lambda-1)}{\omega}(P_{n-1}\left(\xi\right)-P_{n-1}\left(0\right))-\frac{2(\lambda-1)}{\omega}(n-1)\int_{0}^{\xi}P_{n-1}\left(t\right)\, dt+c_{n} & n\geq0
\end{cases}.\label{eq:GIGAsymptDeriv3}
\end{equation}
The unknown terms $c_{0},\ldots,c_{n}$ may be computed through the
following iteration scheme,
\begin{itemize}
\item starting with $u_{0}\left(t\right)=\log\left(b_{0}\right)$, compute,
\[
u_{k}\left(t\right)=\frac{2(\lambda-1)}{\omega}\log\left(1+tu_{k-1}\left(t\right)\right)+\frac{2}{\omega}\log D\left(\frac{t}{1+tu_{k-1}\left(t\right)}\right),\quad k=1,\ldots,n+1,
\]

\item extract the constants, 
\[
c_{k}=\left[t^{k}\right]u_{n+1}\left(t\right),\quad k=0,\ldots,n
\]
where $\left[t^{k}\right]u_{n+1}\left(t\right)$ is used to denote
the coefficient of the $t^{k}$ term in $u_{n+1}\left(t\right)$.
\end{itemize}
Through this process the first few terms of the asymptotic expansion
of $Q$ may be generated as follows, 

\begin{eqnarray*}
Q\left(u\right) & \sim & y+\frac{2\left((-1+\lambda)\ln y+\ln\left(\frac{2}{\omega}\right)\right)}{\omega}\\
 &  & \quad+\frac{4\lambda-\omega^{2}-4+4(-1+\lambda)^{2}\ln y+4(-1+\lambda)\ln\left(\frac{2}{\omega}\right)}{\omega^{2}y}+\cdots,\quad\textrm{as}\; u\rightarrow1.
\end{eqnarray*}
To observe the asymptotic behavior of $Q\left(u\right)$ as $u\rightarrow0$,
we utilize the following identity, 

\begin{equation}
Q\left(u;\lambda,1,\omega\right)=\frac{1}{Q\left(1-u;-\lambda,1,\omega\right)},\label{eq:GIGQuantileIdentity}
\end{equation}
which can be easily proven as follows; by definition of the density
(\ref{eq:GIGDensityAlternativeParam}) we have,

\[
f\left(x;\lambda,1,\omega\right)=\frac{1}{x^{2}}f\left(\frac{1}{x};-\lambda,1,\omega\right).
\]
Integrating both sides then yields,

\[
F\left(x;\lambda,1,\omega\right)=1-F\left(\frac{1}{x};-\lambda,1,\omega\right),
\]
from which (\ref{eq:GIGQuantileIdentity}) follows. 

Next we consider solving the recycling ODE, but first we must choose
a base distribution. Again motivated by the asymptotic behavior of
$Q$ as $u\rightarrow0$ and $u\rightarrow1$, in particular the leading
order behaviors we suggest the following base distribution characterized
by the density function, 

\begin{equation}
f_{B}\left(x\right)=\begin{cases}
p_{\textrm{L}}\frac{\omega}{2x^{2}}e^{-\frac{\omega}{2x}}, & x\leq x_{m}\\
p_{\textrm{R}}\frac{\omega}{2}e^{-\frac{\omega}{2}x}, & x>x_{m}
\end{cases}.\label{eq:GIGBaseDensity}
\end{equation}
Here $x_{m}$ serves as a cutoff point between two suitably weighted
density functions, in particular $x_{m}$ is the mode of the GIG distribution
defined by,

\[
x_{m}=\frac{\lambda-1+\sqrt{(\lambda-1)^{2}+\omega^{2}}}{\omega}.
\]
Note that the density function $f_{L}\left(x\right):=(\omega/2x^{2})e^{-\omega/2x}$
belongs to the scaled inverse $\chi^{2}$ distribution with $2$ degrees
of freedom and scale parameter $\omega/2$, and that the density function
$f_{R}\left(x\right):=(\omega/2)e^{-\omega x/2}$ is the density of
an exponential distribution with rate parameter $\omega/2$. The normalizing
constants $p_{\textrm{L}}$ and $p_{\textrm{R}}$ are defined by, 

\[
p_{\textrm{L}}=e^{\frac{\omega}{2x_{m}}}p_{m},
\]
and,

\[
p_{\textrm{R}}=e^{\frac{\omega}{2}x_{m}}\left(1-p_{m}\right),
\]
where $p_{m}=F_{\textrm{GIG}}\left(x_{m}\right).$ The associated
distribution and quantile functions can be written down as, 

\[
F_{B}\left(x\right)=\begin{cases}
p_{\textrm{L}}e^{-\frac{\omega}{2x}}, & x\leq x_{m}\\
1-\left(1-p_{\textrm{R}}\right)e^{-\frac{\omega}{2}\left(x-x_{m}\right)}, & x>x_{m}
\end{cases},
\]
and,

\begin{equation}
Q_{B}\left(u\right)=\begin{cases}
-\frac{\omega}{2\ln\left(u/p_{\textrm{L}}\right)} & u\leq p_{m}\\
x_{m}+\frac{2}{\omega}\ln\left(\frac{p_{m}-1}{u-1}\right) & u>p_{m}
\end{cases},\label{eq:GIGBaseQuantile}
\end{equation}
respectively. Substituting this choice of $f_{B}$ into the recycling
equation (\ref{eq:RecyclingEquation}) then leads to a left and right
problem given by, 

\begin{equation}
\frac{dA}{dz}=2p_{\textrm{L}}\text{K}_{\lambda}\left(\omega\right)\frac{\omega}{2z^{2}}e^{-\frac{\omega}{2z}}e^{\frac{1}{2}\omega\left(\frac{1}{A}+A\right)}A{}^{1-\lambda},\label{eq:GIGLeftRODE}
\end{equation}
and,

\begin{equation}
\frac{dA}{dz}=p_{\textrm{R}}\omega\text{K}_{\lambda}\left(\omega\right)e^{\frac{1}{2}\omega\left(\frac{1}{A}+A-z\right)}A{}^{1-\lambda},\label{eq:GIGRightRODE}
\end{equation}
respectively, along with the suitably imposed initial conditions.
For the left problem, we choose $u_{0}\in\left(0,p_{m}\right]$ and
impose the initial condition $x_{0}=A\left(z_{0}\right)=Q_{T}\left(u_{0}\right)$,
where $z_{0}:=Q_{B}\left(u_{0}\right)$. Similarly for the right problem
we choose $u_{0}\in\left[p_{m},1\right)$. Treating the left problem
first, we find,

\begin{equation}
A\left(z\right)=\sum_{n=0}^{\infty}a_{n}(z-z_{0})^{n},\label{eq:GIGRODESolution}
\end{equation}
where the coefficients are computed recursively through the identity,

\[
a_{n}\text{=}\begin{cases}
x_{0} & n=0\\
\frac{2p_{\textrm{L}}}{n}\text{K}_{\lambda}\left(\omega\right)\left(\sum_{k=0}^{n-1}\sum_{j=0}^{n-k-1}(k+1)b_{k+1}d_{j}e_{n-k-j-1}\right) & n\geq1
\end{cases},
\]
where,

\[
b_{n}\text{=}\begin{cases}
e^{-\omega/2z_{0}} & n=0\\
\frac{\omega}{2n}\sum_{k=0}^{n-1}(-1)^{k}\frac{(k+1)}{z_{0}^{k+2}}b_{n-k-1} & n\geq1
\end{cases},
\]

\[
c_{n}\text{=}\begin{cases}
\frac{1}{a_{0}} & n=0\\
-\frac{1}{a_{0}}\sum_{i=1}^{n}a_{i}c_{n-i} & n\geq1
\end{cases},
\]

\[
d_{n}\text{=}\begin{cases}
e^{\omega(c_{0}+a_{0})/2} & n=0\\
\frac{\omega}{2n}\sum_{i=1}^{n}i(c_{i}+a_{i})d_{n-i} & n\geq1
\end{cases},
\]
and

\[
e_{n}\text{=}\begin{cases}
a_{0}^{1-\lambda} & n=0\\
\frac{1}{a_{0}}\sum_{i=1}^{n}\left(\frac{(2-\lambda)i}{n}-1\right)a_{i}e_{n-i} & n\geq1
\end{cases}.
\]
The coefficients appearing in the series solution (\ref{eq:GIGRODESolution})
to the right problem \ref{eq:GIGRightRODE} are given by, 

\[
a_{n}\text{=}\begin{cases}
x_{0} & n=0\\
\frac{\omega p_{\textrm{R}}}{n}\text{K}_{\lambda}\left(\omega\right)\left(\sum_{k=0}^{n-1}d_{k}e_{n-k-1}\right) & n\geq1
\end{cases},
\]
where $c_{n}$ and $e_{n}$ are defined as in the solution to the
left problem above and,

\[
b_{n}\text{=}\begin{cases}
\frac{\omega}{2}(c_{1}+a_{1}-1) & n=1\\
\frac{\omega}{2}(c_{n}+a_{n}) & n\neq1
\end{cases},
\]
and,

\[
d_{n}\text{=}\begin{cases}
e^{b_{0}-\omega z_{0}/2} & n=0\\
\frac{1}{n}\sum_{i=1}^{n}ib_{i}d_{n-i} & n\geq1
\end{cases}.
\]

\section{$\alpha$-Stable}

Under the appropriate conditions Lagrange's inversion formula is capable
of providing us with a series representation of functional inverses.
Yet it seems to be ignored in the literature when one wants to find
an approximation to the quantile function, which itself is at least
in the continuous case, defined as the functional inverse of the c.d.f.
Based on this observation we provide a convergent series representation
for the quantile function of the asymmetric $\alpha$-stable distribution.
Note however the method is much more general than this; all it requires
is a power series representation of the c.d.f. 

Suppose the c.d.f. $F_{X}$ and quantile function $Q_{X}$ of a random
variable $X$ have the Taylor series representations, 

\[
F_{X}\left(x\right)=\sum_{n=1}^{\infty}\frac{f_{n}}{n!}\left(x-x_{0}\right)^{n},\;\textrm{and}\; Q_{X}\left(u\right)=\sum_{n=1}^{\infty}\frac{q_{n}}{n!}\left(u-u_{0}\right)^{n},
\]
respectively. The relationship between $F_{X}$ and $Q_{X}$ is given
by $F_{X}\left(Q_{X}\left(u\right)\right)=u$, for all $u\in\left(0,1\right)$.
The goal is to solve this expression, that is we would like to write
the Taylor coefficients $q_{n}$ in terms of $f_{n}$. One such expression
to achieve this is provided by Lagrange's Inversion formula written
in terms of Bell Polynomials, see \cite[§ 13.3]{aldrovandi_special_2001},

\begin{equation}
q_{n}=\begin{cases}
\frac{1}{f_{1}} & n=1\\
-\frac{1}{f_{1}^{n}}\sum_{k=1}^{n}q_{k}\mathbb{B}_{n,k}\left(f_{1},\ldots,f_{n-k+1}\right) & n\geq2
\end{cases},\label{eq:QuantileCoeffsLagrange}
\end{equation}
where the coefficients $\mathbb{B}_{n,k}\left(f_{1},\ldots,f_{n-k+1}\right)$
are the Bell polynomials. They are defined by a rather detailed expression, 

\begin{equation}
\mathbb{B}_{n,k}\left(f_{1},\ldots,f_{n-k+1}\right):=\sum_{\substack{v_{1},v_{2},\ldots\geq0\\
v_{1}+v_{2}+\cdots=k\\
v_{1}+2v_{2}+3v_{3}\cdots=n
}
}\frac{n!}{\prod_{j=1}^{n}\left[v_{j}!\left(j!\right)^{v_{j}}\right]}f_{1}^{v_{1}}f_{2}^{v_{2}}\cdots f_{n-k+1}^{v_{n-k+1}},\label{eq:BellPolynomials}
\end{equation}
where the summation is taken over all solutions $\left(v_{1},v_{2},\ldots,v_{n}\right)\in\mathbb{Z}_{\geq0}^{n}$
to the Diophantine equation,

\begin{equation}
v_{1}+2v_{2}+3v_{3}\cdots+nv_{n}=n\label{eq:DiophantineEquation-1}
\end{equation}
with the added constraint, the sum of the solutions is equal to $k$,
i.e.$\sum_{j=0}^{n}v_{j}=k$. For example the solutions $\left(v_{1},v_{2},v_{3},v_{4}\right)$
for $n=4$ are given by 

\[
\left\{ (4,0,0,0),\:(2,1,0,0),\:(0,2,0,0),\:(0,0,0,1),\:(1,0,1,0)\right\} .
\]
and if $k=2$ this picks out the solutions $(0,2,0,0)$ and $(1,0,1,0)$.
Note that solutions to (\ref{eq:DiophantineEquation-1}) correspond
exactly to the integer partitions of $n$. An integer partition of
a number $n$ is an unordered sequence of positive integers who's
sum is equal to $n$. The added constraint implies we should look
for partitions in which the number of non zero summands is equal to
$k$. For example the integer partitions of $n=4$ are given by $(1,1,1,1)$,
$(1,1,2)$, $(2,2)$, $(4)$, and $(1,3)$, and in the case $k=2$
this singles out the partitions $(2,2)$ and $(1,3)$. Thus to summarize
the sum in (\ref{eq:BellPolynomials}) is taken over all integer partitions
of $n$ in which the number of summands is given by $k$. 

The above recursion (\ref{eq:QuantileCoeffsLagrange}) can be solved
(see \cite[§ 13.3]{aldrovandi_special_2001}) leading to a more computationally
efficient expression for the coefficients, 

\begin{equation}
q_{n}=\begin{cases}
\frac{1}{f_{1}} & n=1\\
\frac{1}{f_{1}^{n}}\sum_{k=1}^{n-1}(-1)^{k}\frac{(n+k-1)!}{(n-1)!}\mathbb{B}_{n-1,k}\left(\frac{f_{2}}{2f_{1}},\frac{f_{3}}{3f_{1}},\ldots,\frac{f_{n-k+1}}{(n-k+1)f_{1}}\right) & n\geq2
\end{cases}.\label{eq:QuantileCoeffsLagrangeRecSolved}
\end{equation}

The $\alpha$-stable distribution, denoted $\textrm{S}_{\alpha}\left(\beta,\mu,\sigma\right)$
is commonly characterized through its characteristic function. However
there are many parametrizations which have lead to much confusion.
Thus from the outset we state explicitly the three parametrizations
we will work with in this report and provide the relationships between
them. We will call these parametrizations$\textrm{P}_{0}$, $\textrm{P}_{1}$
and $\textrm{P}_{2}$ respectively. In the following a subscript under
a parameter denotes the parametrization being used. We will primarily
work with Zoltarev's type (B) parametrization, see \cite[p. 12]{zolotarev_one-dimensional_1986}
denoted by $\textrm{P}_{2}$. In this parametrization the characteristic
function takes the form, 

\begin{equation}
\phi\left(t\right)=\exp\left\{ \sigma_{2}\left(it\mu_{2}-\left|t\right|^{\alpha}e^{-i(\pi/2)\beta_{2}K(\alpha)\text{sgn}\left(t\right)}\right)\right\} ,\label{eq:StableCFParam2}
\end{equation}
where $\alpha\in\left(0,2\right]$ is the tail index, $\mu_{2}\in\mathbb{R}$
is a location parameter, $\sigma_{2}>0$ is a scale parameter, $\beta\in\left[-1,1\right]$
is an asymmetry parameter and $K(\alpha):=\alpha-1+\text{sgn}(1-\alpha)$.
$\textrm{P}_{1}$ is the classic parametrization, and is probably
the most common due to the simplicity of the characteristic function
given by, see \cite[p. 5]{samorodnitsky_stable_1994},

\begin{equation}
\phi\left(t\right)=\exp\left\{ -\sigma_{1}^{\alpha}\left|t\right|^{\alpha}\left(1-i\beta_{1}\text{tan}\left(\frac{\pi\alpha}{2}\right)\text{sgn}\left(t\right)\right)+i\mu_{1}t\right\} .\label{eq:StableCFParam1}
\end{equation}
A connection between the parametrizations $\textrm{P}_{2}$ and $\textrm{P}_{1}$
can be derived by taking logarithms and equating first the real parts
of (\ref{eq:StableCFParam2}) and (\ref{eq:StableCFParam1}), followed
by the coefficients of $t\text{ and }\left|t\right|^{\alpha}$ in
the imaginary parts, leading to the set of relations, 

\begin{eqnarray*}
\mu_{1} & = & \mu_{2}\sigma_{2}\\
\sigma_{1} & = & \left(\text{cos}\left(\frac{1}{2}\pi K\left(\alpha\right)\beta_{2}\right)\sigma_{2}\right){}^{\frac{1}{\alpha}}\\
\beta_{1} & = & \text{cot}\left(\frac{\pi\alpha}{2}\right)\text{tan}\left(\frac{1}{2}\pi K\left(\alpha\right)\beta_{2}\right).
\end{eqnarray*}
Despite the fact no closed form expression for the c.d.f. $F_{\alpha}\left(x;\beta,\mu,\sigma\right)$
of the stable distribution in the general case is known, it can be
expressed in terms of an infinite series expansion. As is usual for
location-scale families of distribution, without loss of generality
we may set the location $\mu_{2}$ and scale $\sigma_{2}$ parameters
to 0 and 1 respectively. In addition it is sufficient to consider
expansions of the c.d.f. $F_{\alpha}\left(x;\beta,0,1\right)$ for
values of $x>0$ only since the following equality holds, 

\begin{equation}
F_{\alpha}\left(x;\beta,0,1\right)=1-F_{\alpha}\left(-x;-\beta,0,1\right).\label{eq:StableCDFXSymmetry}
\end{equation}

\begin{thm}
\label{thm:StableCDFSeries}Let $X\sim\textrm{S}_{\alpha}\left(x;\beta_{2},0,1\right)$
be a standard $\alpha$-stable random variable. Then the cumulative
distribution function of $X$ admits the following infinite series
representations,
\begin{equation}
F_{\alpha}\left(x;\beta_{2},0,1\right)=\sum_{n=0}^{\infty}\frac{f_{n}}{n!}x^{n},\label{eq:StableCDFExpansion1}
\end{equation}
where, 
\[
f_{n}=\begin{cases}
\frac{1}{2}\left(1-\frac{\beta_{2}K\left(\alpha\right)}{\alpha}\right) & n=0\\
(-1)^{n-1}\frac{1}{\pi}\frac{\Gamma\left(\frac{n}{\alpha}+1\right)}{n}\text{sin}\left(\frac{\pi n}{2}\left(1+\frac{\beta_{2}K\left(\alpha\right)}{\alpha}\right)\right) & n\geq1
\end{cases},
\]
and
\begin{equation}
F_{\alpha}\left(x;\beta_{2},0,1\right)=\sum_{n=0}^{\infty}\frac{\tilde{f}_{n}}{n!}x^{-\alpha n},\label{eq:StableCDFExpansion2}
\end{equation}
where, 
\[
\tilde{f}_{n}\text{=}\begin{cases}
1 & n=0\\
(-1)^{n}\frac{1}{\alpha\pi}\frac{\Gamma\left(\alpha n+1\right)}{n}\text{sin}\left(\frac{n\pi}{2}(\alpha+\beta_{2}K\left(\alpha\right))\right) & n\geq1
\end{cases},
\]
If $1<\alpha\leq2$ then (\ref{eq:StableCDFExpansion1}) is absolutely
convergent for all $x>0$ and if $0<\alpha<1$ then (\ref{eq:StableCDFExpansion1})
is an asymptotic expansion of $F_{\alpha}\left(x;\beta_{2},0,1\right)$
as $x\rightarrow0$. In contrast if $0<\alpha<1$ then the series
(\ref{eq:StableCDFExpansion2}) is absolutely convergent for $x>0$
and if $1<\alpha\leq2$ then (\ref{eq:StableCDFExpansion2}) is an
asymptotic expansion of $F_{\alpha}\left(x;\beta_{2},0,1\right)$
as $x\rightarrow\infty$. \end{thm}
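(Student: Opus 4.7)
The plan is to derive both series from the Fourier inversion
\[
f_\alpha(x) = \frac{1}{\pi}\,\textrm{Re}\int_0^\infty e^{-itx}\phi(t)\,dt,\qquad \phi(t) = \exp\!\bigl(-t^\alpha e^{-i\pi\beta_2 K(\alpha)/2}\bigr)\;(t>0),
\]
obtained by splitting the inverse Fourier integral at the origin and using $\phi(-t)=\overline{\phi(t)}$. Both expansions will then come from rotating the contour so that the imaginary part of the exponent is removed, and expanding whichever of $e^{-itx}$ or $\phi(t)$ has the small argument in the regime of interest.

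For (\ref{eq:StableCDFExpansion1}), I would expand $e^{-itx}=\sum_{n\geq 0}(-itx)^n/n!$ and interchange sum and integral. Each term reduces to $\int_0^\infty t^n \exp(-t^\alpha e^{-i\theta_0})\,dt$ with $\theta_0:=\pi\beta_2 K(\alpha)/2$; since $|\beta_2|\le 1$ and $|K(\alpha)|\le 1$, the angle satisfies $|\theta_0|\le \pi/2$, so deforming the contour to the ray $\arg t=\theta_0/\alpha$ converts the exponent into the real variable $-r^\alpha$. The substitution $s=r^\alpha$ then evaluates each integral as $\alpha^{-1}\Gamma((n+1)/\alpha)\,e^{i(n+1)\theta_0/\alpha}$. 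Taking the real part and combining with the $(-i)^n$ prefactor via $\cos A = \sin(\pi/2-A)$ collapses the phase to the sine in $f_n$. Integrating termwise and using $F_\alpha(0;\beta_2,0,1)=\tfrac{1}{2}(1-\beta_2 K(\alpha)/\alpha)$, obtained from the same inversion at $x=0$, fixes the constant term.

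For (\ref{eq:StableCDFExpansion2}), I would rescale $t=s/x$ to obtain $f_\alpha(x)=(\pi x)^{-1}\textrm{Re}\int_0^\infty e^{-is}\phi(s/x)\,ds$, and now expand $\phi(s/x)$ as a power series in $(s/x)^\alpha$. The two exponentials swap roles: rotating the contour towards the negative imaginary axis makes $e^{-is}$ the decaying factor, so each term equals $\Gamma(\alpha m+1)\,e^{-i\pi(\alpha m+1)/2}$. Integrating the resulting density series from $x$ to $\infty$ via $\int_x^\infty y^{-\alpha m-1}\,dy=x^{-\alpha m}/(\alpha m)$ yields $1-F_\alpha(x)$, and a trigonometric identity reshuffles the phase into the sine appearing in $\tilde f_n$, with the constant $\tilde f_0=1$ absorbing the mass.

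The convergence dichotomy then follows from Stirling: $|f_n|/n!\asymp \Gamma(n/\alpha+1)/(n\,n!)$, so $|f_n x^n/n!|^{1/n}\asymp n^{1/\alpha-1}|x|$ tends to $0$ for every $x$ when $\alpha>1$ and to $\infty$ when $\alpha<1$; the analogous estimate $|\tilde f_n|/n!\asymp \Gamma(\alpha n+1)/(n\,n!)$ reverses the dichotomy for (\ref{eq:StableCDFExpansion2}). In the divergent regime, Watson's lemma applied directly to the untruncated integral representation shows the series is nonetheless an asymptotic expansion. The main obstacle I anticipate is the rigorous justification of the two contour rotations: one must estimate the integrand on the closing circular arcs using $|\exp(-t^\alpha e^{-i\theta_0})|=\exp(-t^\alpha\cos\theta_0)$ with $\cos\theta_0>0$, verify that the rotated ray lies in a sector of absolute convergence for both $\alpha<1$ and $\alpha>1$, and confirm that the specific sign built into $K(\alpha)$ keeps the admissible angle strictly within $(-\pi/2,\pi/2)$ on either side of $\alpha=1$.
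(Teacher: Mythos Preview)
Your proposal is correct and follows essentially the same route as the paper: inverse Fourier transform of the characteristic function, expansion of the appropriate exponential factor, contour integration to evaluate the resulting moment-type integrals, and termwise integration to pass from density to distribution function. The paper's proof is only a three-line sketch that defers the computations to Lukacs \cite[\S 5.8]{lukacs_characteristic_1970}, whereas you have spelled out the two dual expansions (of $e^{-itx}$ for the small-$x$ series, of $\phi$ for the large-$x$ series), the specific contour rotation, and the Stirling/Watson arguments for the convergence dichotomy; but the underlying strategy is identical.
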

\begin{proof}
The proof proceeds by obtaining first a series expansion of the density
of $X$. This is achieved by applying the inverse Fourier transform
to (\ref{eq:StableCFParam2}), expanding the exponential function,
and then performing a contour integration, see \cite[§ 5.8]{lukacs_characteristic_1970}
for details. The expansion in the density can be integrated term by
term to obtain an expansion of the c.d.f. 
\end{proof}
Note when $1<\alpha\leq2$, the series (\ref{eq:StableCDFExpansion1})
rapidly converges for values of $x$ near zero, where as (\ref{eq:StableCDFExpansion2})
converges rapidly for large values of $x$. The opposite is true when
$0<\alpha<1$. Thus we can now apply Lagrange's inversion formula
to find a series expansion of the quantile function $Q_{\alpha}$
in the central and tail regions. Note however that since the expansions
(\ref{eq:StableCDFExpansion1}) and (\ref{eq:StableCDFExpansion2})
are valid only for $x>0$, the resulting expansions of $Q_{\alpha}$
are only valid for $u>u_{0}$, where $u_{0}$ is the zero quantile
location defined by $u_{0}:=F_{\alpha}\left(0;\beta_{2},0,1\right)$.
This does not pose a restriction however since it follows from (\ref{eq:StableCDFXSymmetry}), 

\[
Q_{\alpha}\left(u;\beta,0,1\right)=-Q_{\alpha}\left(1-u;-\beta,0,1\right).
\]
Applying Lagrange's inversion formula to (\ref{eq:StableCDFExpansion1})
we see that the quantile function has the following infinite series
representation valid for $u>u_{0}$, 

\begin{equation}
Q_{\alpha}(u;\beta,0,1)=\sum_{n=1}^{\infty}\frac{q_{n}}{n!}\left(u-u_{0}\right){}^{n},\label{eq:StableQuantileSeries1}
\end{equation}
where the coefficients $q_{n}$ are given by (\ref{eq:QuantileCoeffsLagrangeRecSolved})
and 

\[
u_{0}=\frac{1}{2}\left(1-\frac{\beta_{2}K\left(\alpha\right)}{\alpha}\right).
\]
To find the functional inverse of (\ref{eq:StableCDFExpansion2})
we make the change of variable $y:=x^{\frac{1}{\alpha}}$, and apply
Lagrange's inversion formula to the power series, 

\[
G\left(y\right):=\sum_{n=0}^{\infty}\frac{\tilde{f}{}_{n}}{n!}y^{n}.
\]
The quantile function is then given by, 

\[
Q_{\alpha}(u;\beta,0,1)=\left[G^{-1}\left(u\right)\right]^{-\frac{1}{\alpha}},
\]
where,

\begin{equation}
G^{-1}\left(u\right)=\sum_{n=1}^{\infty}\frac{\tilde{q}_{n}}{n!}\left(u-1\right){}^{n},\label{eq:StableQuantileSeries2}
\end{equation}
and the coefficients $\tilde{q}_{n}$ are given by (\ref{eq:QuantileCoeffsLagrangeRecSolved})
with $f_{n}$ replaced by $\tilde{f}_{n}$. Note when $1<\alpha\leq2$,
the series (\ref{eq:StableQuantileSeries1}) rapidly converges for
values of $u$ near $u_{0}$, where as (\ref{eq:StableQuantileSeries2})
converges rapidly for values of $u$ close to $1$. In this case partial
sums of (\ref{eq:StableQuantileSeries1}) serve as good approximations
of the quantile function in the central regions where as partial sums
of (\ref{eq:StableQuantileSeries2}) can be used to approximate the
tails. The opposite is true when $0<\alpha<1$. The first few terms
of (\ref{eq:StableQuantileSeries1}) are given by, 

\[
Q_{\alpha}(u;\beta,0,1)=\frac{\pi\text{csc}\left(\pi\rho\right)}{\Gamma\left(1+\frac{1}{\alpha}\right)}\left(u-u_{0}\right)+\frac{\pi^{2}\text{cot}\left(\pi\rho\right)\text{csc}\left(\pi\rho\right)\Gamma\left(\frac{2+\alpha}{\alpha}\right)}{2\Gamma\left(1+\frac{1}{\alpha}\right)^{3}}\left(u-u_{0}\right){}^{2}+\textrm{O}\left(\left(u-u_{0}\right)^{3}\right)
\]

Implementation is a rather straightforward matter, high level languages
such as Mathematica have a built in implementation of the Bell polynomials
and plenty of algorithms exist to generate integer partitions in lower
level languages such as C++, see for example \cite{zoghbi_fast_1998}.
As long as we have a series representation of the c.d.f. using this
approach we could derive a series representation for the associated
Quantile function. However there is one obvious drawback, even though
the coefficients $q_{n}$ can be computed using elementary algebraic
operations, the number of partitions $p\left(n\right)$ of an integer
$n$ grows exponentially with $n$, thus for large values of $n$
the sum in (\ref{eq:BellPolynomials}) may be computationally expensive
due to the large number of summands. 

However the problem of reverting a power series is a classical one
in mathematics and many efficient algorithms have been devised as
a result. For example Knuth \cite[§ 4.7]{knuth_art_1998} gives several
algorithms for power series reversion including a classical algorithm
due to Lagrange (1768) that requires $O\left(N^{3}\right)$ operations
to compute the first $N$ terms. More recently Brent and Kung \cite{brent_fast_1978}
provide an algorithm which requires only about $150\left(N\log N\right)^{3/2}$
floating point operations. Dahlquist et al. \cite{dahlquist_numerical_2008}
also present a convenient but slightly less efficient algorithm based
on Toeplitz matrices. 

Concerning the numerical evaluation of the distribution function $F_{\alpha}$
of the stable distribution it has been remarked by various authors
\cite{stoyanov_numerical_2004} that the series expansions given in
\ref{thm:StableCDFSeries} are only useful for approximating $F_{\alpha}$
for either small or large values of $x$, due to the slow convergence
of the series. It is for this reason standard methods such as the
Fast Fourier transform or numerical quadrature techniques are applied
to evaluate $F_{\alpha}$. However we found in our experimentation
that series acceleration techniques such as Padé approximants and
Levin type transforms \cite{weniger_nonlinear_2003} could be applied
to (\ref{eq:StableCDFExpansion1}) and (\ref{eq:StableCDFExpansion2}).
The resulting rational functions were affective for approximation
purposes. The same comments apply for the series representation of
the density and quantile functions, at least for the set of parameters
we tested, which include those occurring frequently in financial data.
We will discuss numerical issues further in the next section.

\section{Numerical Techniques and Examples}

The goal of this section is to discuss the design of an algorithm
which accepts a set of distribution parameters and constructs at runtime
an approximation $Q_{A}$ to the quantile function satisfying some
prespecified accuracy requirements. In particular we are interested
in parameters occurring most often in Financial data, but the proposed
algorithm works for a much wider range of the parameter space. For
the distributions we have considered in this paper, most published
algorithms of this nature are based on interpolating or root finding
techniques. We break this mold by briefly examining other methods
of approximation based on certain convergence acceleration techniques.
We shall call the time it takes to construct an approximation the
setup time, and the time it takes to evaluate $Q_{A}$ at a point
$u$ the execution time of the algorithm. With the analytic expressions
made available in this report, a wide range of possibilities become
available. For instance the expansions may be used in conjunction
with other numerical techniques: 
\begin{itemize}
\item Techniques based on interpolation often fail in the tail regions \cite{derflinger_efficient_2009},
for this reason a good idea would be to supplement the algorithm with
the asymptotic expansions developed above. 
\item Root finding techniques are known to converge slowly. To improve the
rate of convergence one needs to supply a good initial guess of the
root. Such as provided by the truncated Taylor series provided above,
or better yet a Padé approximant. 
\end{itemize}
Another plausible approach is to construct a numerical integrator
based on the Taylor method. It has been reported by many authors that
when high precision is required this is the method of choice, see
for example \cite{corliss_solving_1982}, \cite{barrio_vsvo_2005}
and \cite{jorba_software_2005}. The idea here is to discretize the
domain $\left[a,b\right]\subset\left(0,1\right)$ into a non-uniform
grid $a=u_{0},\ldots,u_{n}=b$. To build this grid we must determine
the step sizes $h_{k}=u_{k}-u_{k-1}$ for $k=1,\ldots n$. In addition
at each grid point $u_{k}$ we must determine the order $m_{k}$ of
Taylor polynomial so that the required accuracy goals are achieved.
That is both the stepsize and order are variable. Based on two or
more terms of the Taylor series certain tests have been devised to
compute the $h_{k}$ and $m_{k}$, see again the references mentioned
above and the review article \cite{halin_applicability_1983}. 

The result of Taylor's method is a piecewise polynomial approximation
to the quantile function. However in this report we are more interested
in constructing rational function approximants of the form,

\[
R_{m,n}\left(v\right)=\frac{P_{m}\left(v\right)}{Q_{n}\left(v\right)}=\frac{\sum_{i=0}^{m}a_{i}v^{i}}{\sum_{i=0}^{n}b_{i}v^{i}}.
\]
and so we will not discuss Taylor's method further. Some of the best
known algorithms for approximating quantile functions are based on
rational function approximations. Unfortunately such approximations
traditionally are only available for ``simple'' distributions, see
for example \cite{moro_full_1995}, \cite{shaw_quantile_2010} or
\cite{acklam_algorithm_2009}. For more ``complicated'' distributions
alternative techniques are usually employed such as root finding or
interpolation. As mentioned above, these methods have severe limitations,
such as slow execution or setup times. An even bigger setback with
these techniques however is their failure in the tail regions. The
goal of this section is to devise an algorithm to overcome these problems. 

Ideally one would like to construct the best rational approximation
$R_{m,n}^{*}\left(v\right)$ to the quantile function $Q$ in the
minimax sense. There are numerous methods available such as the second
algorithm of Remes \cite{ralston_rational_1965} for the construction
$R_{m,n}^{*}\left(v\right)$. However these methods require several
a large number of function evaluations, and since $Q$ may only be
evaluated to high precision through a slow root finding scheme such
algorithms usually lead to unacceptable setup times. 

Therefore we suggest four alternative algorithms based on certain
series acceleration techniques %
\footnote{All of which have been prototyped in Mathematica using double precision
arithmetic, and do not rely on any of Mathematica's symbolic capabilities,
making them portable to lower level languages such as C++ or Fortran. %
}. The procedures assume the availability of an integration and root
finding routine to compute the distribution and quantile functions
respectively to full machine precision. These routines will be used
to compute the initial conditions and manage the error in the approximation
of $Q$. The inputs to the algorithm are: 1) the distribution parameters,
2) the required accuracy $\epsilon$. We partition the unit interval
and for convenience name each part as follows, 

\begin{eqnarray*}
\left(0,\tau_{L}\right) & = & \textrm{Left Tail Region}\\
\left[\tau_{L},u_{1}\right) & = & \textrm{Left Region}\\
\left[u_{1},u_{2}\right] & = & \textrm{Central Region}\\
\left(u_{2},\tau_{R}\right] & = & \textrm{Right Region}\\
\left(1-\tau_{R},1\right) & = & \textrm{Right Tail Region}
\end{eqnarray*}
The idea behind the first algorithm is then as follows, 
\begin{itemize}
\item The asymptotic expansions of $Q$ as developed above are employed
to approximate $Q$ in the left and right tail regions. Of course
these asymptotic expansions are divergent; however we have found constructing
Padé approximants and Levin-type sequence transforms particularly
useful in summing the series. The order of the approximant is chosen
in advance, and a numerical search is conducted to determine $\tau_{L}$
and $\tau_{R}$, which are typically small values $\thickapprox10^{-9}$. 
\item We will choose the mode quantile location $u_{m}:=F\left(x_{m}\right)$,
where $x_{m}$ is the mode of the distribution as the midpoint of
the central region. On this region the Taylor expansion of $Q$ at
$u_{m}$ serves as a useful approximation. Hence we construct the
sequence of corresponding Padé approximants along the main diagonal
of the Padé table. The sequence is terminated when a convergent is
found which satisfies the required accuracy goals. Note like Taylor
polynomials, Padé approximants provide exceptionally good approximations
near a point, in this case the point of expansion $p_{m}$, but the
error deteriorates as we move away from the point. Hence we need only
check the accuracy requirements are met at the end points of the interval
$\left[u_{1},u_{2}\right]$ %
\footnote{This fact is not entirely true, since in some rare cases a Padé approximant
may exhibit spurious poles not present in the original function $Q$.
A more robust algorithm would check that none of the real roots of
the denominator polynomial appearing in the Padé approximant lie in
the interval $\left[u_{1},u_{2}\right]$. Such poles are called defects,
see \cite{jr_pade_2010} for details.%
}. We will discuss how to choose the points $u_{1}$ and $u_{2}$ below.
\item On the left and right regions the left and right solutions of the
recycling equation (\ref{eq:RecyclingEquation}) denoted $A_{L}$
and $A_{R}$ respectively, serve as particularly good approximations.
Note that this is precisely what they were designed to do. Again a
series acceleration technique such as Levin's u-transform \cite{roy_rational_1996}
or Padé summation may be applied to the Taylor polynomials of $A_{L}$
and $A_{R}$. As is usual with such techniques we observed analytic
continuation and increased rates of convergence using these techniques.
The points at which we impose the initial conditions are critical.
For now we have chosen $u_{0}=F_{B}\left(z_{0}\right)$ where $z_{0}=\left(Q_{B}\left(\tau_{L}\right)+Q_{B}\left(u_{m}/2\right)\right)/2$
and $z_{0}=\left(Q_{B}\left(1-\tau_{R}\right)+Q_{B}\left(u_{m}/2\right)\right)/2$
for the left and right problems respectively. But by varying these
initial conditions we alter the range of distribution parameters for
which the algorithm is valid. An optimal choice has yet to be set. 
\end{itemize}
The points $u_{1}$ and $u_{2}$ enclosing the central region are
determined by an estimate $\tilde{r}$ of the radius of convergence
$r$ for the series,

\begin{equation}
Q\left(u\right)=\sum_{n=0}^{\infty}q_{n}(u-u_{0})^{n}.\label{eq:QuantileSeries}
\end{equation}
In particular we set, 

\[
u_{1}=p_{m}-\bar{r},
\]
and,

\[
u_{2}=p_{m}+\bar{r},
\]
where,

\[
\bar{r}=\min\left\{ \tilde{r},\left|p_{m}-0.1\right|,\left|p_{m}-0.9\right|\right\} 
\]

For simplicity our approximation of $\tilde{r}$ will be based on
the Cauchy-Hadamard formula \cite[§ 2.2]{henrici_applied_1974}. However
note that the problem of estimating the radius of convergence is a
rather old one and many more advanced techniques have been developed
to determine $\tilde{r}$. For example Chang and Corliss \cite{chang_ratio-like_1980}
form a small system of equations based on three, four and five terms
of the series (\ref{eq:QuantileSeries}) to determine $\tilde{r}$.
However we are not overly concerned if $\tilde{r}$ over estimates
the radius of convergence of \ref{eq:QuantileSeries} since this will
be compensated by the fact that applying an appropriate summation
technique will provide analytic continuation of (\ref{eq:QuantileSeries}).
Thus for this iteration of the algorithm we will be content with a
simple estimate of $\tilde{r}$ provided by the Cauchy-Hadamard formula. 

We demonstrate the performance of the algorithm by observing a few
test cases for the hyperbolic distribution. Note that it would be
difficult to test the validity of the algorithm for the entire parameter
space of the distribution due to the recursive nature of the coefficients
appearing in the series expansions, so we have biased our testing
to parameters which frequently occur in financial data and a few extreme
cases. As a test case consider the parameters $\alpha=89.72$, $\beta=4.7184$,
$\delta=0.0014$ and $\mu=-0.0015$. These are the estimated parameters
for BMW stock prices reported in \cite{eberlein_hyperbolic_1995},
computed over a 3 year period. Setting the required accuracy to $\epsilon=2.98\times10^{-8}$,
the algorithm constructs an approximant within 0.32 seconds. This
time is for a rather unoptimized Mathematica prototype of the algorithm
running on an Intel i7 laptop. We would expect production code to
be a fraction of this time. The resulting error plots are given in
figure \ref{fig:HyperbolicPadeErrorPlots}. Since the highest degree
of the approximant is only $24$ we expect the algorithm to have reasonably
fast execution times.

\begin{figure}[th]
\hfill{}\subfloat[$y+R_{10,10}\left(1/y\right)$]{\includegraphics[scale=0.85]{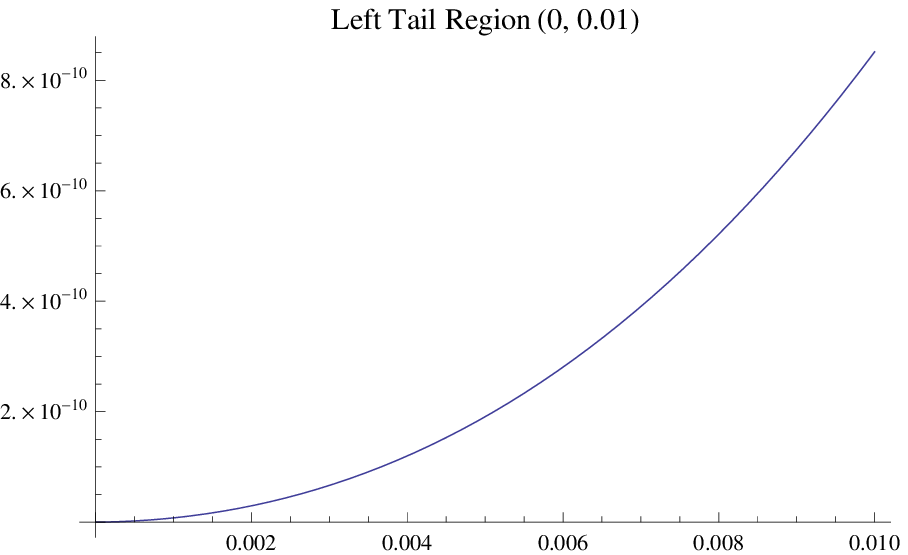}

}\hfill{}\subfloat[$y+R_{10,10}\left(1/y\right)$]{\includegraphics[scale=0.85]{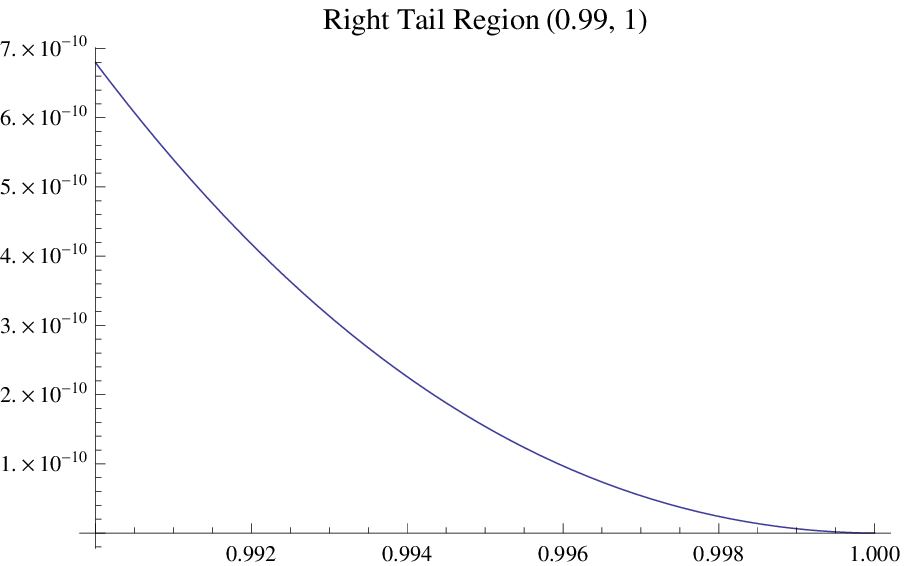}

}\hfill{}

\hfill{}\subfloat[$R_{12,12}\left(z-z_{0}\right)$ ]{\includegraphics[scale=0.85]{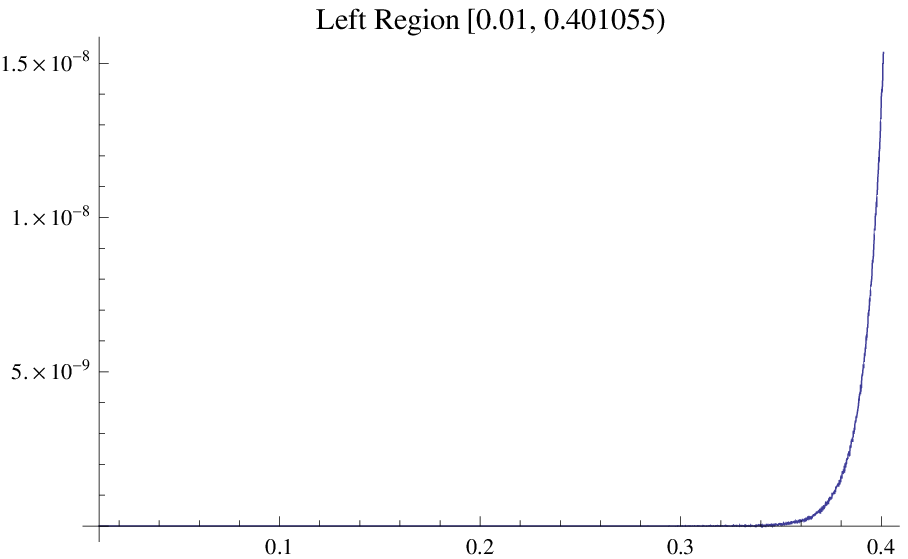}

}\hfill{}\subfloat[$R_{12,12}\left(z-z_{0}\right)$]{\includegraphics[scale=0.85]{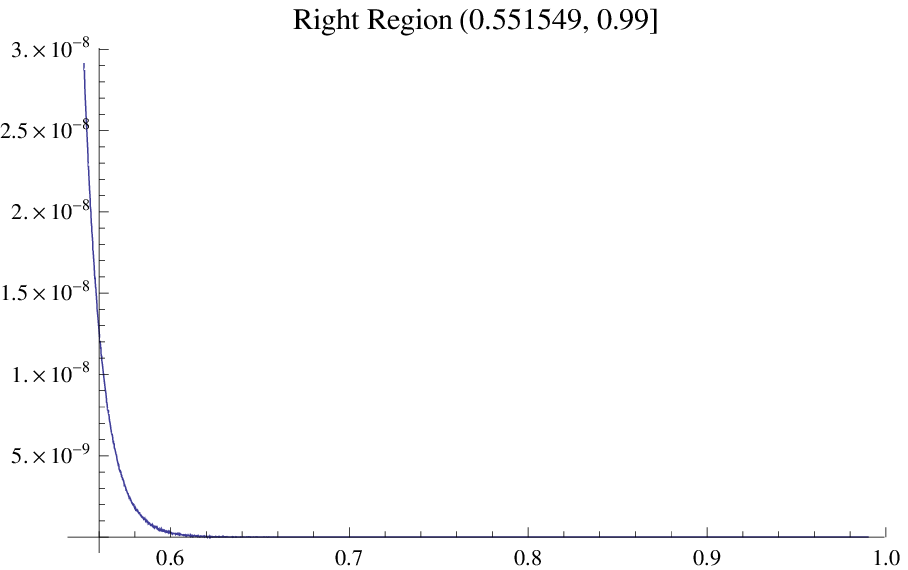}

}\hfill{}

\hfill{}\subfloat[$R_{11,11}\left(u-u_{0}\right)$]{\includegraphics[scale=0.85]{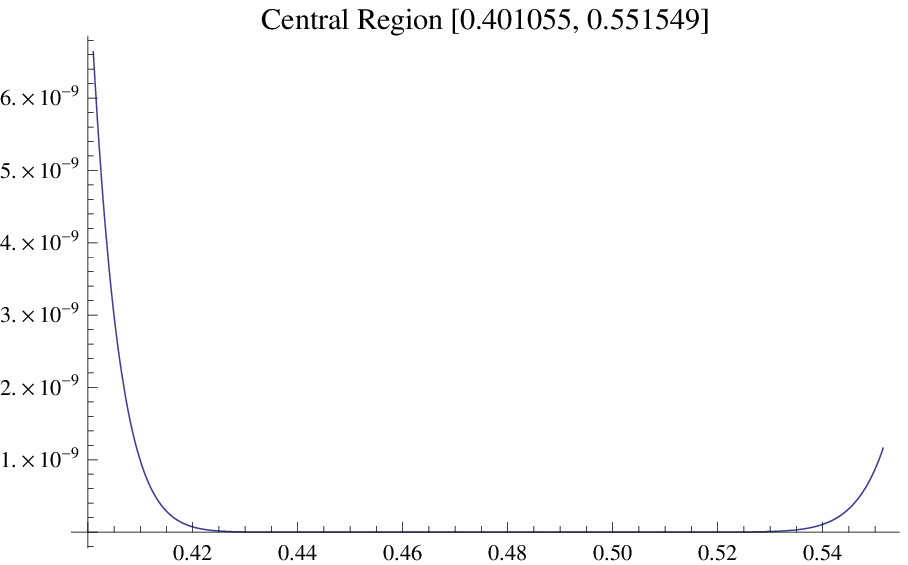}

}\hfill{}

\caption{\label{fig:HyperbolicPadeErrorPlots}Algorithm 1 (Padé Approximants):
Error Plots $\left|u-F\left(Q_{A}\left(u\right)\right)\right|$. }
\end{figure}

Despite meeting the accuracy requirements for the given set of parameters,
as can be seen from figure \ref{fig:HyperbolicPadeErrorPlots} the
error curve produced by this algorithm is far from optimal in the
minimax sense. Thus our next two algorithms have to do with constructing
so called near best approximants, which are often good enough in practice
due to the difficulties involved in finding $R_{m,n}^{*}$. To this
end consider the partition $\tau_{L}<u_{1}<u_{2}<1-\tau_{R}$. For
simplicity we choose $\tau_{L}=\tau_{R}=10^{-10}$. We are interested
in the problem of constructing the Chebyshev series expansions of
the functions $A_{L}\left(z\left(u\right)\right)$, $Q\left(u\right)$
and $A_{R}\left(z\left(u\right)\right)$ restricted to the sets $\left[\tau,u\right]$,
$\left[u_{1},u_{2}\right]$ and $\left[u_{2},1-\tau_{R}\right]$ respectively.
In each case we introduce a linear change of variable $x$ which maps
the restricted domain onto the set $\left[-1,1\right]$. To ease notation
let $g\left(x\right):=\left.Q\right|_{\left[u_{1},u_{2}\right]}\left(x\right)$;
in this case $\left.x\right|\left[u_{1},u_{2}\right]\rightarrow\left[-1,1\right]$
is defined by, 

\begin{equation}
x\left(u\right)=\frac{u-\frac{1}{2}(u_{2}+u_{1})}{\frac{1}{2}(u_{2}-u_{1})}=\frac{u-p_{m}}{\bar{r}}.\label{eq:xU}
\end{equation}
Following from the properties of $Q$, the function $g$ is continuous
and of bounded total variation and thus admits the expansion, 

\begin{equation}
g\left(x\right)=\frac{\tilde{g}_{0}}{2}+\sum_{k=0}^{\infty}\tilde{g}_{k}T_{k}\left(x\right),\quad x\in\left(0,1\right),\label{eq:ChebyshevSeriesQ}
\end{equation}
where $T_{k}\left(x\right)$ are the Chebyshev polynomials of the
first kind and the coefficients $c_{k}$ are defined by,

\begin{equation}
\tilde{g}_{k}=\frac{2}{\pi}\int_{-1}^{1}g\left(x\right)\frac{T_{k}\left(x\right)}{\sqrt{1-x^{2}}}dx.\label{eq:ChebyshevCoefficientsQ}
\end{equation}
Since the quantile functions $Q$ we have considered in this report
are assumed to be infinitely differentiable, elementary Fourier theory
tells us that the error made in truncating the series (\ref{eq:ChebyshevSeriesQ})
after $K$ terms goes to zero more rapidly than any finite power of
$1/K$ as $K\rightarrow\infty$, \cite[§ 3]{gottlieb_numerical_1977}.
Such rapid decrease of the remainder motivates us to seek efficient
methods to evaluate the integral in (\ref{eq:ChebyshevCoefficientsQ}).
This task can rarely be performed analytically so the the usual process
is to write (\ref{eq:ChebyshevCoefficientsQ}) as,

\[
\tilde{g}_{k}=\frac{2}{\pi}\int_{0}^{\pi}g\left(\cos\theta\right)\cos k\theta d\theta,
\]
and apply a variant of the fast Fourier transform \cite[§ 29.3]{hamming_numerical_1973}.
This method however would require several thousand evaluations of
the function $g$ and result in unacceptable setup times. Fortunately
an alternative approach is provided by Thacher \cite{thacherjr._conversion_1964}.
Let the Taylor series expansion of $g$ be given by, 

\begin{equation}
g\left(x\right)=\sum_{k=0}^{\infty}g_{k}x^{k}.\label{eq:gTaylorSeries}
\end{equation}
Then by inverting (\ref{eq:xU}) and substituting into (\ref{eq:QuantileSeries})
we see that the coefficients appearing in (\ref{eq:gTaylorSeries})
may be written as $g_{k}=\bar{r}q_{k}$. Now substituting the following
relationship between the monomials $x^{k}$ and the Chebyshev polynomials
\cite[eq. 4]{thacherjr._conversion_1964},

\[
x^{k}=T_{0}\left(x\right)+\sum_{j=1}^{k}\theta_{k,j}T_{j}\left(x\right),
\]
where,

\[
\theta_{j,k}=\begin{cases}
2^{1-j}\binom{j}{\frac{j-k}{2}} & j-k\;\textrm{even}\\
0 & j-k\;\textrm{odd}
\end{cases},
\]
into (\ref{eq:gTaylorSeries}), one may express the Chebyshev coefficients
$\tilde{g}_{k}$ in terms of the Taylor coefficients $g_{k}$ as follows, 

\begin{equation}
\tilde{g}_{k}=\sum_{j=k}^{\infty}g_{j}\theta_{j,k},\label{eq:ChebyTaylorCoeffRelationship}
\end{equation}
Thacher's approach was simple, he observed that applying Shank's transform
to (\ref{eq:ChebyTaylorCoeffRelationship}) one may approximate the
Chebyshev coefficients even for slowly convergent series. In our experimentation
we found Levin's $u$-transform to also be affective, which of course
was not discovered at the time Thacher wrote his paper. Note that
Levin's $u$ transform has been reported in many instances to outperform
Shank's transform \cite{smith_numerical_1982-1}. For efficient Fortran
and C++ implementations of these two transforms see \cite{weniger_nonlinear_2001}
and \cite{press_numerical_2007} respectively. Thus given the knowledge
of the Taylor coefficients $g_{k}$ we now have a method to approximate
the Chebyshev coefficients without having to resort to the computationally
expensive task of evaluating (\ref{eq:ChebyshevCoefficientsQ}) through
numerical quadrature. An estimate of the error made in truncating
the series (\ref{eq:ChebyshevSeriesQ}) is given by the magnitude
of the first neglected term. In a similar fashion we can construct
Chebyshev series expansions for the restrictions $\left.A_{L}\right|_{\left[\tau_{L},u_{1}\right]}$
and $\left.A_{R}\right|_{\left[u_{2},1-\tau_{R}\right]}$. As can
be seen from figure \ref{fig:errorPlotTruncatedChebyshev} an algorithm
based on constructing truncated Chebyshev expansions in this manner
provides us with a much more satisfactory error curve. 

\begin{figure}[th]
\hfill{}\subfloat[$T_{34,0}\left(x\right)$]{\includegraphics[scale=0.85]{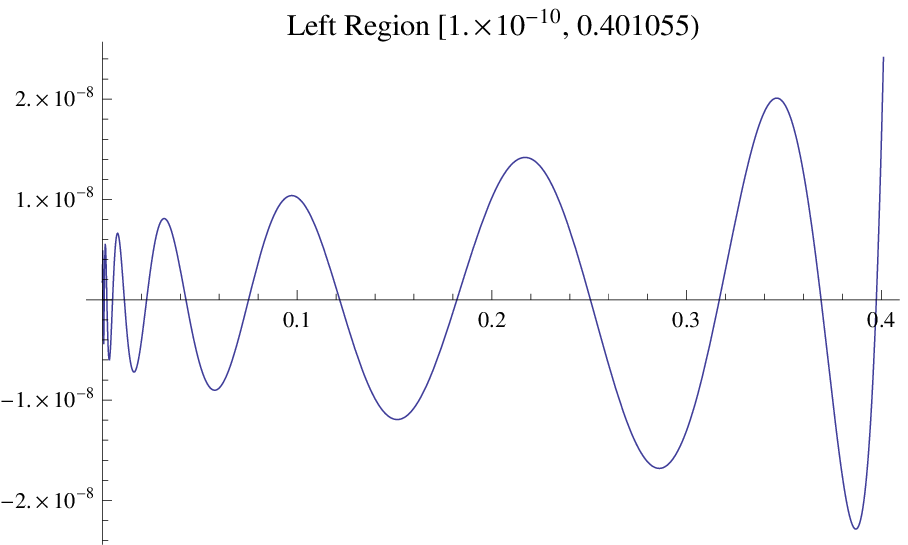}

}\hfill{}\subfloat[$T_{35,0}\left(x\right)$]{\includegraphics[scale=0.85]{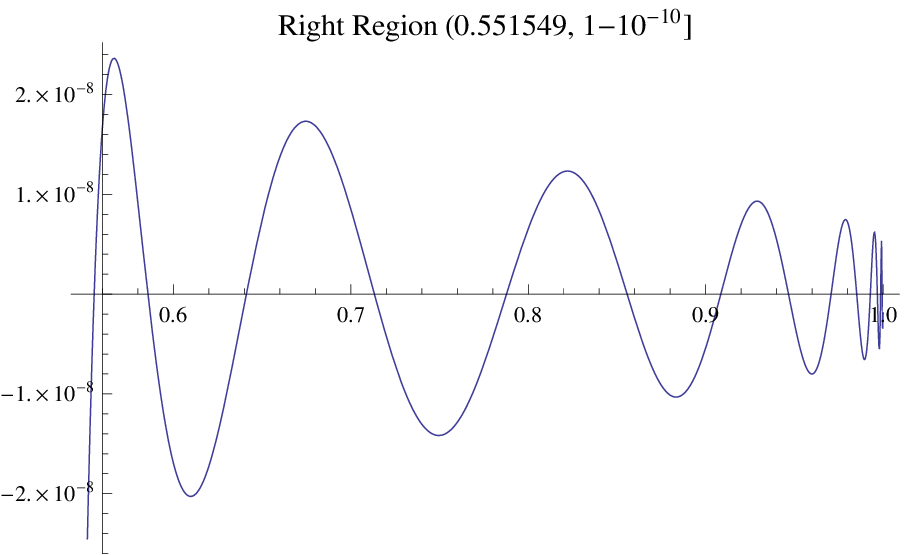}

}\hfill{}

\hfill{}\subfloat[$T_{7,0}\left(x\right)$]{\includegraphics[scale=0.85]{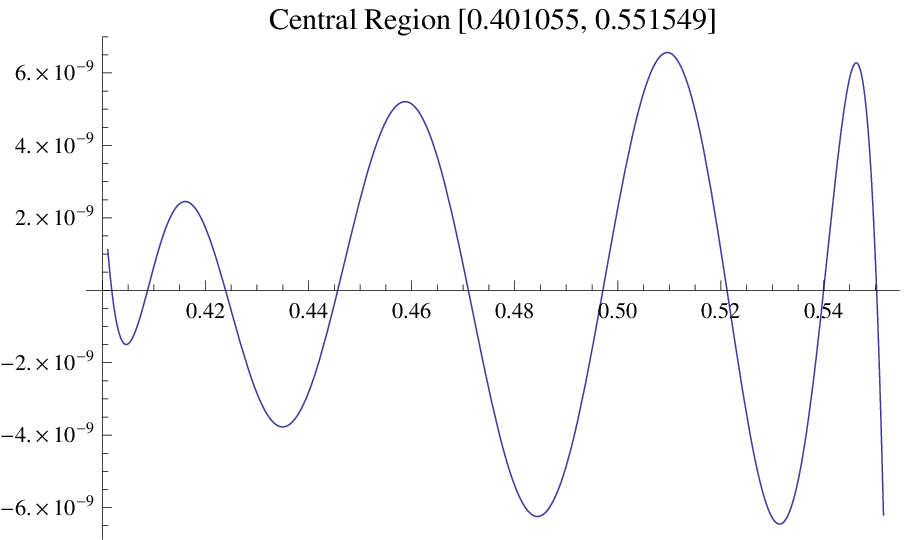}

}\hfill{}

\caption{\label{fig:errorPlotTruncatedChebyshev}Algorithm 2 (Truncated Chebyshev
Series): Error Plots $\left(Q\left(u\right)-Q_{A}\left(u\right)\right)$}
\end{figure}

Perhaps the greatest drawback of this algorithm is that it requires
one to compute a large number of Chebyshev coefficients. In our example
the series needed to be truncated at the $35^{\textrm{th}}$ term
in order to satisfy the accuracy requirements in the right region,
see figure \ref{fig:errorPlotTruncatedChebyshev}. We stress that
approximating these $35$ coefficients via \ref{eq:ChebyTaylorCoeffRelationship}
is not a computationally expensive task in contrast to a numerical
integration scheme, however evaluating a $35^{\textrm{th}}$ degree
polynomial may lead to unacceptable execution times. It is for this
reason we consider constructing Chebyshev-Padé approximants of the
function $g$, which are rational functions of the form, 

\begin{equation}
T_{m,n}\left(x\right):=\frac{a_{0}T_{0}\left(x\right)+\cdots+a_{m}T_{m}\left(x\right)}{b_{0}T_{0}\left(x\right)+\cdots+b_{n}T_{n}\left(x\right)},\label{eq:PadeChebyshev}
\end{equation}
which have a formal Chebyshev series expansion in agreement with (\ref{eq:ChebyshevSeriesQ})
up to and including the term $\tilde{g}_{m+n}T_{m+n}\left(x\right)$.
Given the knowledge of the first $n+m+1$ Chebyshev coefficients $\tilde{g}_{0},\ldots,\tilde{g}_{n+m}$,
an efficient way to construct $T_{m,n}\left(x\right)$ is to employ
the algorithm of Sidi \cite{avram_computation_1975}, in which the
coefficients appearing in \ref{eq:PadeChebyshev} are computed recursively.
For an estimate of the error given by (\ref{eq:PadeChebyshev}) see
\cite[eq. 7.6-12]{ralston_first_2001}. Applying this scheme to the
same test case as above we obtain the error plots given in figure
(\ref{fig:ErrorPlotsChebyshevPade}).

\begin{figure}[h]
\hfill{}\subfloat[$T_{4,4}\left(x\right)$ ]{\includegraphics[scale=0.85]{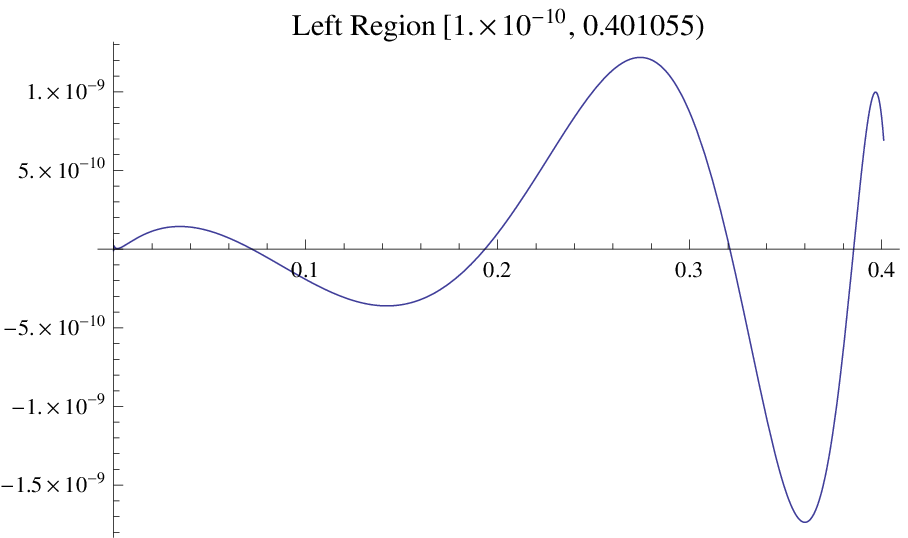}

}\hfill{}\subfloat[$T_{6,6}\left(x\right)$]{\includegraphics[scale=0.85]{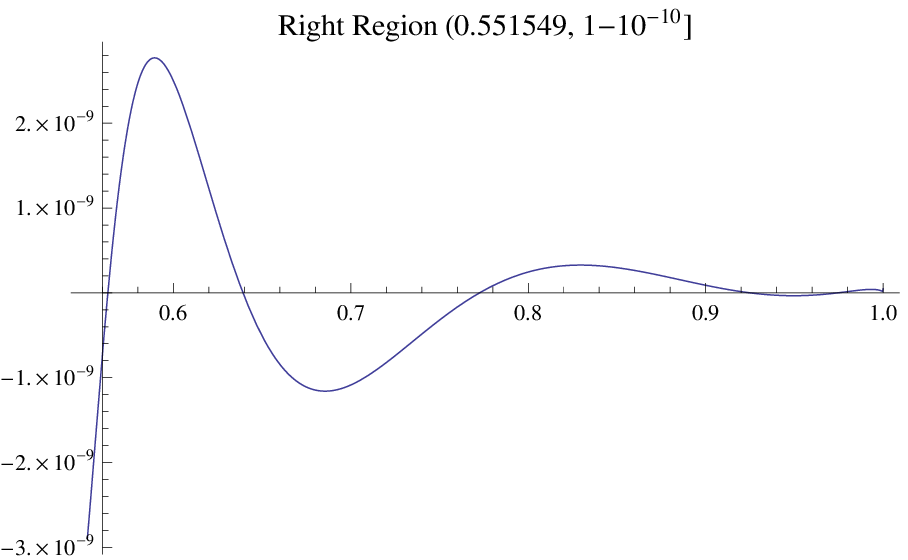}

}\hfill{}

\hfill{}\subfloat[$T_{4,4}\left(x\right)$ ]{\includegraphics[scale=0.85]{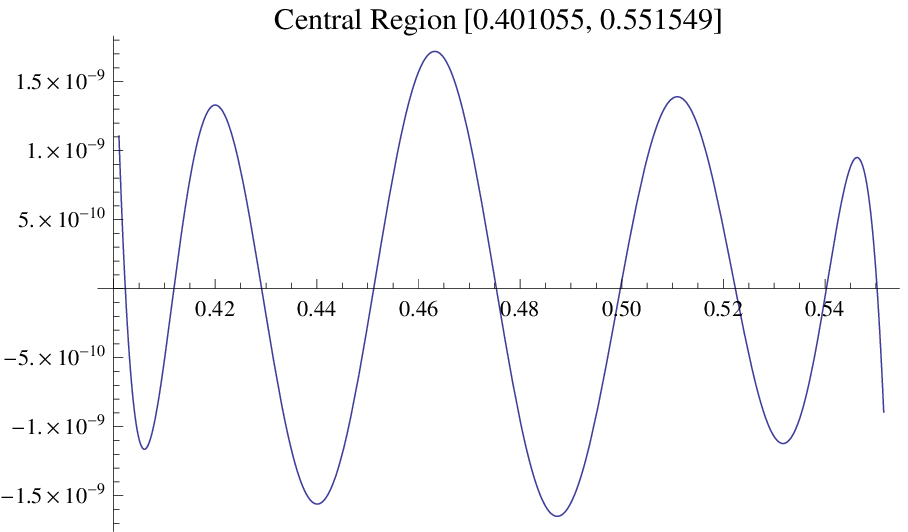}

}\hfill{}

\caption{\label{fig:ErrorPlotsChebyshevPade}Algorithm 3 (Chebyshev-Padé Approximants
): Error Plots $\left(Q\left(u\right)-Q_{A}\left(u\right)\right)$}
\end{figure}

The real benefit of this algorithm is the fast setup and execution
times it provides. For example we required a truncated Chebyshev series
of degree $34$ to approximate the hyperbolic quantile to the required
accuracy in the left region of our test case, again see figure (\ref{fig:errorPlotTruncatedChebyshev}).
On the other hand a Chebyshev-Padé approximant of degree only $8$
provides us with a much better approximant in the same region, contrast
with figure (\ref{fig:ErrorPlotsChebyshevPade}). 

Our final algorithm is based on multipoint Padé approximants \cite{jr_pade_2010}.
Consider the partition $\tau_{L}<p_{m}<1-\tau_{R}$, where $p_{m}$
is the mode location. Choose the points $u_{0},\ldots,u_{K}\in\left[\tau_{L},p_{m}\right]$,
and let $s_{0},\ldots,s_{K}\in\mathbb{N}$. Suppose now we want to
build a rational approximant $R_{m,n}$ valid on $\left[\tau_{L},p_{m}\right]$
which satisfies the following conditions, 

\begin{equation}
A_{L}^{\left(j\right)}\left(z_{k}\right)=R_{m,n}^{\left(j\right)}\left(z_{k}\right),\quad0\leq k\leq K,\;0\leq j\leq s_{k}-1,\label{eq:InterpolationData}
\end{equation}
where $z_{k}=Q_{B}\left(u_{k}\right)$. Such a problem is called an
osculatory rational interpolation problem. The interpolation data
on the right hand side of (\ref{eq:InterpolationData}) may be generated
by solving the recycling equation (\ref{eq:RecyclingEquation}) for
$A_{L}$ with initial conditions imposed at the points $u_{0},\ldots,u_{K}$.
Provided a solution to the problem exists it may be solved efficiently
through the generalized Q.D. algorithm \cite{p.r._generalised_1980}.
This algorithm generates the partial numerators and denominators of
a continued fraction, who's convergents are rational functions called
multipoint Padé approximants satisfying (\ref{eq:InterpolationData}).
Similarly by considering solutions to $A_{R}$ we may find a rational
approximant valid on $\left[p_{m},1-\tau_{R}\right]$. By choosing
only $3$ interpolation points, we apply the method to our test case
and provide the results in figure (\ref{fig:MultipointPadeErrorPlot}). 

\begin{figure}
\hfill{}\subfloat[$R_{10,10}\left(z\right)$. Interpolation Data: $u_{0}=p_{m}/3$,
$u_{1}=p_{m}$, $s_{0}=15$, $s_{1}=6$]{\includegraphics[scale=0.85]{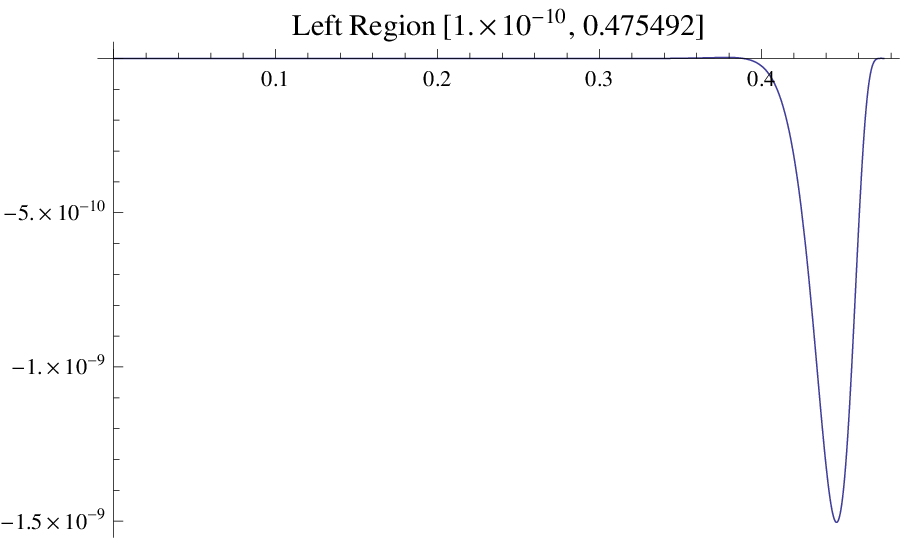}

}\hfill{}\subfloat[$R_{10,10}\left(z\right)$. Interpolation Data: $u_{0}=p_{m}$, $u_{1}=\left(2+p_{m}\right)/3$,
$s_{0}=5$, $s_{1}=16$]{\includegraphics[scale=0.85]{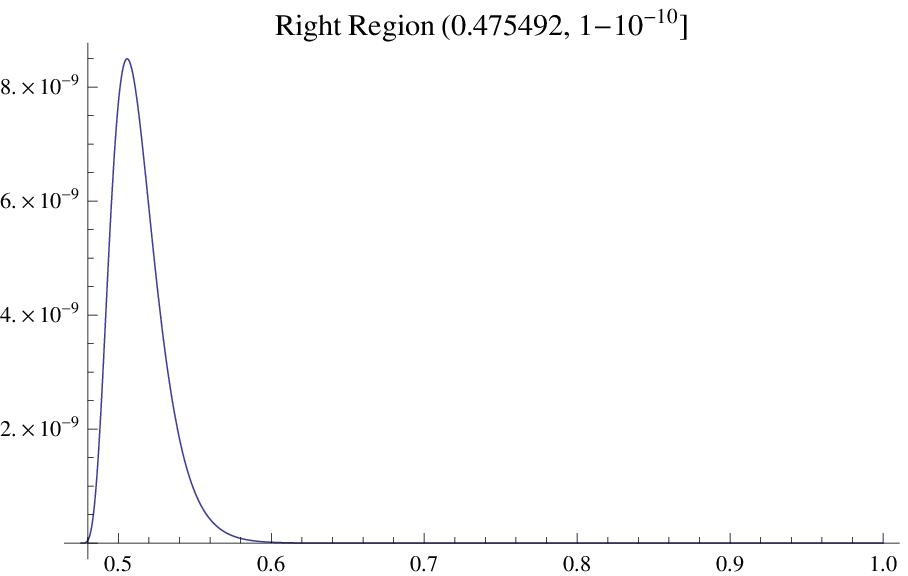}

}\hfill{}

\caption{\label{fig:MultipointPadeErrorPlot}Algorithm 4 (Multipoint Padé Approximants):
Error Plots $u-F\left(Q_{A}\left(u\right)\right)$}
\end{figure}

One could go a step further and attempt to construct the minimax approximant
$R_{m,n}^{*}$, however as mentioned most of the methods known to
the present author to construct $R_{m,n}^{*}$ require evaluating
$Q$ at a large number of points. It is for this reason we recommend
Maehly's indirect method \cite{maehly_methods_1963}. This method
of constructing $R_{m,n}^{*}$ is applicable when a sufficiently accurate
approximation of $Q$ is available. Such an initial approximation
may be obtained by the methods described above. Maehly's indirect
method then seeks to find a more optimal approximation in the Chebyshev
sense without the need to evaluate $Q$, which of course in our case
is an expensive operation. Thus one would expect reduced setup up
times with this approach, however we have not explored this idea further. 

\bibliographystyle{plain}
\bibliography{C:/WorkArea/PhD/Bibtex/ReferenceLibrary}

\end{document}